\documentclass[letterpaper,12pt]{article}

\usepackage{epsfig,psfrag,amsmath,amssymb,latexsym}
\usepackage[usenames,dvipsnames]{color}
\usepackage{amscd, amsthm, amsfonts}
\usepackage{amsfonts}
\usepackage[top=2cm, bottom=2cm, left=2cm, right=2cm]{geometry}
\usepackage[usenames,dvipsnames]{xcolor}

\newtheorem{thm}{Theorem}[section]
\newtheorem{lemma}[thm]{Lemma}
\newtheorem{prop}[thm]{Proposition}
\newtheorem{cor}[thm]{Corollary}

\theoremstyle{remark}
\newtheorem{rmk}[thm]{Remark}

\def\cA{\mathcal{A}}
\def\cB{\mathcal{B}}
\def\cC{\mathcal{C}}

\def\eps{{\varepsilon}}

\def\tf{{\tilde{f}}}
\def\tF{{\widetilde{F}}}

\def\bW{{\mathbb{W}}}
\def\tW{{\widetilde{W}}}
\def\tbW{{\widetilde{\mathbb{W}}}}

\def\tE{{\widetilde{E}}}
\def\tSigma{{\widetilde{\Sigma}}}

\def\bE{{\mathbb{E}}}
\def\bP{{\mathbb{P}}}
\def\bR{{\mathbb{R}}}
\def\bT{{\mathbb{T}}}
\def\bX{{\mathbb{X}}}
\def\bY{{\mathbb{Y}}}
\def\bW{{\mathbb{W}}}

\def\hphi{{\hat{\phi}}}
\def\hn{{\hat{n}}}

\def\hW{{\hat{W}}}
\def\hbW{{\hat{\mathbb{W}}}}

\sloppy


\begin{document}

\title{Autonomous evolution of electron speeds in a thermostatted system:
exact results}
\date{31 August 2018 \quad (upd.\ 26 December 2018)}
\author{F. Bonetto,\\
\emph{\small School of Mathematics, Georgia Institute of
Technology, Atlanta GA 30332, USA} \\
N. Chernov\footnote{Deceased},\\
\emph{\small Department of Mathematics, University of Alabama at
Birmingham,
Birmingham AL 35294, USA} \\
A. Korepanov,\\
\emph{\small College of Engineering, Mathematics and Physical Sciences, 
University of Exeter, Exeter,
UK}\\
J.L. Lebowitz.\\
\emph{\small Departments of Mathematics and Physics, Rutgers
University, Piscataway NJ 08854, USA} }

\maketitle

\begin{abstract}
We investigate a dynamical system consisting of $N$ particles 
moving on a $d$-dimensional torus under the action of an electric 
field $E$ with a Gaussian thermostat to keep the total energy constant.
The particles are also subject to stochastic collisions which randomize
direction but do not change the speed. We prove that in the van Hove scaling limit,
$E\to 0$ and $t\to t / |E|^2$, the trajectory of the speeds $v_i$ is described by
a stochastic differential equation corresponding to diffusion on a 
constant energy sphere. This verifies previously conjectured behavior.

Our results are based on splitting the system's evolution into a ``slow'' process
and an independent ``noise''. We show that the noise, suitably rescaled,
converges to a Brownian motion, enhanced in the sense of rough paths.
Then we employ the It\^o-Lyons continuity theorem to identify the limit of the
slow process.
\end{abstract}

\section{Introduction}

Many-particle dynamical systems in which different quantities evolve on 
different time scales are common in nature.
This situation arises when the 
microscopic degrees of freedom come to some local equilibrium (stationary) 
state characterized by parameters which vary slowly in time. This leads to 
autonomous macroscopic equations, such as the Navier-Stokes equation,
with or without stochastic terms. Such equations are generally very hard
to derive rigorously, especially in situations which involve deterministic
dynamics.
In this note we continue our investigation of such a system of $N$
interacting particles with both deterministic and stochastic 
dynamics which conserves the total kinetic energy.

The system we consider is a variation of the 
Drude-Lorentz model of electrical 
conduction in a metal~\cite{AM}. It consists of $N$ particles
(electrons) in a $d$-dimensional torus under the action
of a constant external field $E$. The particles undergo 
elastic collisions with fixed or random scatterers,
which change directions of the velocities $\{p_k\}$
but not the speeds $\{|p_k|\}$, and are subject
to a Gaussian thermostat which keeps the total kinetic energy
of the system constant. The thermostat introduces dynamical interactions
between the particles.

We have studied this system extensively, in $d=2$,
via numerical simulations and approximate analytical methods,
using various models for the elastic 
collisions~\cite{BCKLs,BCKL,BDL,BDLR}. We have argued there 
that in all collision models, for a weak field $E$ the evolution
splits into a fast and slow parts which evolve essentially
independently, with the slow
part satisfying an autonomous diffusive equation.
We were however unable to prove this in a rigorous way: see 
Section~\ref{sec:RPV}.

Here we show, for the first time, in a rigorous mathematical way,
using the simplified collision model, that
the long time weak field evolution of the properly scaled system
(van Hove scaling) is described by an autonomous SDE, driven by an
$N$-dimensional Brownian motion.
To do this we apply the theory of ``rough paths'', pioneered by
Lyons (see e.g. \cite{Lyons98}).
We follow and adapt the approach of Kelly and Melbourne~\cite{KM16} for
nonuniformly hyperbolic dynamical systems. 
We find a way to decompose the evolution of the velocities $\{p_k\}$ into 
``fast'' and ``slow'' components, with the fast component uncoupled
from the slow.

\begin{rmk}
  A straightforward decomposition would be $p_k \mapsto (v_k, \omega_k)$,
  where $v_k = |p_k|$ and $\omega_k = p_k / |p_k|$. This
  corresponds to an intuition of quickly changing directions
  and slowly changing speeds. However, in such a decomposition,
  the evolution of $\{\omega_k\}$ depends on $\{v_k\}$: between collisions,
  the particles change directions, and the lower a particle's speed,
  the faster its direction changes.
  The influence of $\{v_k\}$ on $\{\omega_k\}$ causes substantial problems:
  the topic of rough differential equations with noise
  coupled to the solution is an unexplored area.
  (Though some related progress has been made in dynamical systems
  \cite{DeSimoiLiverani16,DeSimoiLiverani18,Dolgopyat05}.) We
  use a different decomposition, which may look technical and artificial
  but gives independent noise. 
\end{rmk}

To make our approach work, we make the following simplifying assumptions:
\begin{itemize}
  \item we assume that the collision rate of particle $k$
    is independent from its speed $v_k$. In the kinetic theory of gases,
    these are referred to as ``Maxwellian'' collisions;
  \item at each collision, the outgoing direction $\omega$
    of the colliding particle is selected uniformly on the unit
    sphere $S^{d-1}$ in $\bR^d$.
\end{itemize}

The paper is organized as follows. In Section~\ref{sec:MR} we
give a precise description of the model and state our results.
Section~\ref{sec:pf-main} contains the main idea and strategy 
of the proof of our main result while Section~\ref{sec:iwip}
contains a technical adaptation of \cite{KM16} to our situation,
used in Section~\ref{sec:pf-main}.
Finally, in Section~\ref{sec:RPV}
we discuss the connection of this work with 
previous work in this ongoing research.
In Appendix~\ref{sec:CDESM} we give a standard example where the solution
of a differential equation is not a continuous function of a
driving signal, and in Appendix~\ref{sec:PSD} we show that
an $n$-particle diffusion on the sphere of radius $\sqrt{n}$
corresponds, as $n \to \infty$, to an Ornstein-Uhlenbeck process
for the motion of a single particle.

\section{Model and Results}
\label{sec:MR}

We consider a system formed by \(N\) particles moving on a torus 
\(\bT^d\), \(d \geq 2\),\footnote{See Remark~\ref{rmk:1d} for $d=1$.}
with positions \(q_k\) and velocities \(p_k = \dot{q}_k\). The particles move under 
an electric field \(E\) and a Gaussian thermostat, which preserves the total energy,
so \(U = \sum_k |p_k|^2\) is constant~\cite{BCKLs,BCKL,BDL,BDLR}:
\begin{equation} \begin{aligned}
  \label{eq:gt0}
  \begin{cases}
    d q_k &= p_k \, dt, \\
    dp_k &= \Bigl( E - \frac{\sum_j E \cdot p_j}{U} p_k \Bigr) \, dt,
  \end{cases}
\end{aligned} \end{equation}
In addition each particle experiences random collisions, independent of the other particles.
Collisions are driven by Poisson processes with rate \(\lambda > 0\), the same
for all particles. At a collision, a particle's direction $p_k/|p_k|$,
which is a point on the unit sphere \(S^{d-1}\) in \(\bR^d\), changes
randomly and uniformly on \(S^{d-1}\), while the speed \(|p_k|\)
is preserved.

Suppose that \(E = \eps \hn\), where \(\hn \in \bR^d\) is a fixed unit 
vector, and \(\eps > 0\). Let \(v_k = |p_k|\).
Let the initial conditions \(p_k(0)\) and \(q_k(0)\) be fixed.
Then \(v = (v_1, \ldots, v_N)\) is random process with continuous sample paths.

\begin{rmk}
  In our model, positions of particles affect neither velocities between
  collisions nor collisions themselves. Working with the velocities,
  the positions can be safely ignored.
\end{rmk}

Heuristic arguments~\cite{BCKLs} show that for small \(\eps\), the time changed
process \(v^\eps\), \(v^\eps(t) = v(\eps^{-2} t)\), behaves like a nontrivial stochastic process.
In this paper, we describe this behavior rigorously.

Our main result is:

\begin{thm}
  \label{thm:main}
  As \(\eps \to 0\), the time changed processes \(v^\eps\),
  \(v^\eps(t) = v(\eps^{-2} t)\), converge weakly to the solution \(v^0\)
  of the It\^o stochastic differential equation
  \begin{equation}
    \label{SDE}
    \begin{aligned}
      d v^0_k
      & = \delta^{1/2}\Bigl[ d W_k - v^0_k \frac{\sum_j v^0_j \, d W_j}{U} \Bigr]
      + \delta \Bigl[\frac{d-1}{2v^0_k}
      - \frac{(Nd-1) v ^0_k}{2 U}
      \Bigr] \, dt
      , \\
      v^0_k(0) &= |p_k(0)|
      .
    \end{aligned}
  \end{equation}
  Here \(\delta = 2 \lambda^{-1} d^{-1}\) and
  \(W_1, \ldots, W_N\) are standard independent Brownian motions.
  The weak convergence is in the \(C^0([0,\infty), \bR^N)\) topology.
\end{thm}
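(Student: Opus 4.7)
My plan is to follow the strategy sketched in the abstract. Differentiating $v_k^2 = |p_k|^2$ under \eqref{eq:gt0} between collisions gives
$\dot v_k = \eps\, \hn \cdot \omega_k - \eps\, v_k\, U^{-1} \sum_j v_j\, \hn \cdot \omega_j$,
where $\omega_k = p_k/|p_k|$; collisions do not change $v_k$. After the van Hove time change $t \mapsto \eps^{-2} t$, the slow variables satisfy
$\dot v^\eps_k = \eps^{-1}\bigl[\hn \cdot \omega_k^\eps - v_k^\eps\, U^{-1} \sum_j v_j^\eps (\hn \cdot \omega_j^\eps)\bigr]$,
an equation of the form $dv^\eps = \eps^{-1} F(v^\eps)\, d\eta^\eps$ where $\eta^\eps$ is built from the rescaled directions. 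The program is to rescale $\eta^\eps$ to a Brownian rough path and pass to the limit via It\^o--Lyons continuity.

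\textbf{Decoupling (the main obstacle).} The immediate difficulty, as flagged in the Remark, is that $\omega^\eps$ is not an autonomous fast process: between collisions it evolves through the thermostat, coupling its law to $v^\eps$. I would therefore replace $(v_k,\omega_k)$ by a decomposition in which the driver lives on a fixed reference space whose law is independent of $v^\eps$. The two Maxwellian simplifications are tailored to enable this: the collision rate $\lambda$ is speed-independent, so jump times form an exogenous Poisson process, and post-collision directions are uniform on $S^{d-1}$, hence state-independent. One can then reparametrize the directions between collisions, absorbing the $v^\eps$-dependent part of the deterministic thermostat flow into the definition of the coefficient $F$, so that the remaining driver is a functional of a Poisson process and an i.i.d.\ uniform sequence on $S^{d-1}$, and in particular independent of $v^\eps$. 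Producing the correct such reparametrization is the main obstacle; once it is in place, the rest of the proof follows a well-established template.

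\textbf{Rough path convergence and identification of the limit.} With an autonomous fast process in hand, I would apply the functional central limit theorem adapted from \cite{KM16} in Section~\ref{sec:iwip} to show that the rescaled driver $X^\eps(t) = \eps \int_0^{\eps^{-2}t} \eta(s)\, ds$ converges, jointly with its iterated integrals, to an enhanced Brownian motion $\mathbf{W} = (W,\mathbb{W})$ in the rough path topology. Exponential mixing of the Poisson--reset chain makes the required covariance and iterated-integral estimates much more tractable than in the hyperbolic setting of \cite{KM16}; the covariance of $W$ comes from integrated autocorrelations of the direction process and should be proportional to $\delta = 2/(\lambda d)$, while the symmetric part of $\mathbb{W}$ encodes the It\^o--Stratonovich correction. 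The It\^o--Lyons continuity theorem then yields weak convergence of $v^\eps$ to the solution $v^0$ of the RDE driven by $\mathbf{W}$. Rewriting that RDE as an It\^o SDE produces the martingale term $\delta^{1/2}\bigl[dW_k - v_k^0\, U^{-1} \sum_j v_j^0\, dW_j\bigr]$ directly, and the drift $\delta\bigl[(d-1)/(2 v_k^0) - (Nd-1)\,v_k^0/(2U)\bigr]$ arises from this correction together with the Jacobian of $F$. The explicit coefficients are recovered by routine integration against the uniform measure on $S^{d-1}$.
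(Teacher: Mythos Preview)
Your high-level strategy---decouple into a slow variable and an autonomous fast driver, prove rough path convergence of the rescaled driver via an adaptation of \cite{KM16}, then apply It\^o--Lyons continuity---is exactly the paper's. The gap is that you have not found the decoupling, and you say so: ``Producing the correct such reparametrization is the main obstacle.'' Your proposed fix, to stay with the $(v,\omega)$ split and ``absorb the $v^\eps$-dependent part of the deterministic thermostat flow into the coefficient $F$'', is not carried out and is not obviously feasible. Between collisions $\dot\omega_k = v_k^{-1}\bigl(E-(\omega_k\cdot E)\omega_k\bigr)$, so the direction process is a $v$-dependent time change of a fixed flow on $S^{d-1}$; it is unclear how one would push this dependence into a coefficient acting on an \emph{autonomous} $N$-dimensional driver, and the resulting $O(\eps)$ corrections to $\omega_k$ on each inter-collision interval do not vanish after the van Hove rescaling.

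The paper does not attempt to decouple $(v,\omega)$ at all. Instead it lifts to a $dN$-dimensional slow variable $u_k=\phi_k^*p_k$, where $\phi_k(t)\in SO(d)$ is the product of the collision rotation matrices applied to particle $k$ up to time $t$. Two things happen: $u_k$ is continuous across collisions (the jump in $p_k$ is exactly cancelled by the jump in $\phi_k^*$), and between collisions $\dot u_k=\phi_k^*E-u_k\,U^{-1}\sum_j u_j^*\phi_j^*E$. The driver is $\Phi_k(t)=\int_0^t\phi_k^*(s)\hn\,ds$, a functional of the Poisson clocks and the i.i.d.\ Haar rotations only, hence independent of $u$. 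The slow equation is $du=\eps A(u)\,d\Phi$ with $A(u)=I-uu^*/U$, and the limit is the Stratonovich spherical diffusion $du^0=A(u^0)\circ dW$ (Theorem~\ref{thm:u}). Only at the very end does one pass to $v_k^0=|u_k^0|$; the It\^o formula for the radial part of a $d$-dimensional block of $u^0$ produces both the martingale term and the drift $(d-1)/(2v_k^0)-(Nd-1)v_k^0/(2U)$ in~\eqref{SDE}, and this last step is where the restriction $d\ge 2$ enters. So the missing idea is: lift rather than project---work with $u\in\bR^{Nd}$ and an $Nd$-dimensional autonomous noise, and recover $v$ only after the limit is taken.
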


\begin{rmk}
  Since the processes \(v^\eps\) have continuous sample paths in \(\bR^N\),
  it is natural to think of them as random elements of the space
  \(C^0([0,\infty), \bR^{N})\) of continuous functions
  with the usual topology of uniform convergence on compact sets.
  
  The weak convergence of \(v^\eps\) to \(v^0\) in \(C^0([0,\infty), \bR^N)\) topology
  means that \(\bE f(v^\eps) \to \bE f(v^0)\) for every continuous
  test function \(f \colon C^0([0,\infty), \bR^N) \to \bR\).
  Weak convergence is also known as convergence in distribution
  and convergence in law.
\end{rmk}

\begin{rmk}
  \label{rmk:sSDE}
  The evolution of speeds $\{v^0_k\}$ can be described by a much simpler SDE.
  There exists a $dN$-dimensional process $u^0 = (u^0_1, \ldots, u^0_N)$, with
  each $u^0_k$ a $d$-dimensional process and $v^0_k = |u^0_k|$ at all times.
  The process $u^0$ is a diffusion on the sphere \(|u^0|^2 = U\) which can be
  written as a solution of a Stratonovich SDE
  \begin{equation}
    \label{eq:dSDE}
    d u^0
    = \Bigl( I - \frac{u^0 (u^0)^*}{U} \Bigr) \, \circ d W
    .
  \end{equation}
  Here $W$ is a $dN$-dimensional
  Brownian motion with covariance matrix $2 \lambda^{-1} d^{-1} I$.
  See Section~\ref{sec:pf-main}, specifically Theorem~\ref{thm:u},
  for details.
\end{rmk}

\begin{rmk}
  \label{rmk:1d}
  Theorem~\ref{thm:main} is restricted to $d \geq 2$: with $d=1$,
  SDE~\eqref{SDE} gives a wrong process which allows $v_k$ to become negative.
  Remark~\ref{rmk:sSDE}, however, fully applies to all $d \geq 1$.
  In our proof, Theorem~\ref{thm:main} is derived from Remark~\ref{rmk:sSDE}
  and the restriction $d \geq 2$ comes up in the transition 
  from~\eqref{eq:dSDE} to~\eqref{SDE}.
  With $d=1$, the correct SDE for $v_k$ is expected to be more complicated
  than~\eqref{SDE}, much like the SDE for the absolute value
  of a one-dimensional Brownian motion.
\end{rmk}

\section{Proof of Theorem~\ref{thm:main}}
\label{sec:pf-main}

\subsection{Strategy}

We split the velocities \(p_k\) into ``slow'' and ``fast''
components, such that the fast component is independent from the slow one.
Then we write the evolution of the slow component as a \emph{rough differential equation,}
where the noise is generated by the fast component alone.
We show that the noise, suitably rescaled, converges  as \(\eps \to 0\) to an
\emph{enhanced Brownian motion} in a suitable \emph{rough path topology.}
Then we use the It\^o-Lyons continuity theorem to describe the limiting slow process.
Finally, we recover the speeds \(v_k\) from the slow process.

\subsection{Decomposition into fast and slow components}

We start by (re)defining the random collisions in a convenient way.
For \(1 \leq k \leq N\), let 
\(\tau_k = (\tau_k^1, \tau_k^2, \ldots)\)
be Poisson processes with rate \(\lambda\), and let
\(g_k^1, g_k^2, \ldots\) be random matrices
in the orthogonal group \(O(d)\), chosen uniformly
(i.e.\ with respect to the Haar measure).
We assume that all these are mutually independent.

We make the \(k\)-th particle collide at times \(\tau_k^n\),
with the instantaneous change of velocity at time
\(\tau_k^n\) given by \(p_k (\tau_k^n) = g_k^n \, p_k (\tau_k^n - 0)\),
where \(p_k (t - 0)\) stands for the left limit of \(p_k\) at
time \(t\).

Define \(\phi_k \colon [0,\infty) \to O(d)\), \(k=1\ldots N\) by 
\(\phi_k(t) = I\) for \(t < \tau_k^1\) and
\[
  \phi_k (t) = g_k^{n} g_k^{n-1} \cdots g_k^1
  \qquad \text{for} \qquad t \in [\tau_k^{n}, \tau_k^{n+1})
  , \quad n \geq 1
  .
\]

Let \(u_k = \phi_k^* p_k\), where the star means the transpose
(also inverse in \(O(d)\)). Then \(u_k(0) = p_k(0)\)
and \(|u_k| = |p_k| = v_k\) at all times. Observe that
\(u_k\) does not jump at collisions, in contrast with \(p_k\).
From the equations of motion~\eqref{eq:gt0} we obtain
\begin{equation}
  \label{eq:agq}
  du_k = \Bigl( \phi_k^* E - u_k \frac{\sum_j u_j^* \phi_j^* E}{U} \Bigr) \, dt.
\end{equation}
Note that \(u_k\) are continuous and piecewise smooth processes.

Let \(\Phi_k (t) = \int_{0}^{t} \phi_k^*(s) \hn \, ds\).
Writing all \(u_k\), \(1 \leq k \leq N\) as one \(N d\)-dimensional vector \(u\), 
and similarly all \(\Phi_k\) as one vector \(\Phi\), we 
rewrite~\eqref{eq:agq} as
\begin{equation}
  \label{eq:azmny}
  du = \eps A(u) \, d \Phi,
\end{equation}
where \(A(u) = I - \frac{u u^*}{U}\) is a smooth matrix
valued function.

Define \(u^\eps (t) = u(\eps^{-2} t)\) and \(W^\eps (t) = \eps \Phi (\eps^{-2} t)\).
For every \(\eps > 0\), the process \(W^\eps\) is piecewise smooth and
\(u^\eps\) is the solution of an ordinary differential equation
\begin{equation}
  \label{eq:jahsd}
  d u^\eps = A(u^\eps) \, dW^\eps
  , \qquad
  u^\eps(0) = \xi
  ,
\end{equation}
where \(\xi \in \bR^{Nd}\) corresponds to the initial condition \(u^\eps_k(0) = p_k(0)\).

The above is the decomposition into a ``slow'' component
\(u\) and an uncoupled ``fast'' component \(\Phi\). While \(u\) has
no obvious physical meaning, we can still use it to recover the speeds,
because \(v_k = |u_k|\).

\subsection{Limiting behavior}
\label{sec:LB}

Our next goal is to identify the limit of \(u^\eps\) as \(\eps \to 0\).
We can treat \(u^\eps\) as a function of \(W^\eps\). Indeed, there is a 
\emph{solution map} \(\Gamma\), defined on all piecewise smooth paths,
such that \(\Gamma(W^\eps) = u^\eps\) for each \(\eps > 0\).

We will show that \(W^\eps\) converges to a 
\(dN\)-dimensional Brownian motion \(W\) with 
covariance matrix \(2 \lambda^{-1} d^{-1} I\).
(Convergence is weak in the \(C^0([0,\infty), \bR^d)\) topology.)

Then, heuristically, one would expect that \(u^0 = \lim_{\eps \to 0} \Gamma(W^\eps) = \Gamma(W)\).
Such a statement requires continuity of \(\Gamma\) on a suitable space of paths, including
all \(W^\eps\) and \(W\).
There are two immediate problems:
\begin{itemize}
  \item As a Brownian motion, \(W\) is rather irregular,
    so \(\Gamma(W)\) cannot be understood as a solution of an ordinary differential equation.
    It needs an interpretation, possibly as a solution of the stochastic differential equation
    \(d u^0 = A(u^0) \, \diamond dW\), where \(\diamond\) means integration in the sense of
    e.g.\ It\^o, Stratonovich or backward It\^o.
  \item There is no reason for \(\Gamma\) to be continuous.
    In fact, no matter what interpretation of \(\Gamma(W)\) we choose,
    \(\Gamma\) will fail to be continuous in any usable way.
    (We provide a standard example in Appendix~\ref{sec:CDESM}.)
\end{itemize}

Identifying \(u^0\) is a standard problem in the theory of rough paths.
We fix \(T > 0\) and consider all processes in the time interval \([0,T]\).
For \(s, t \in [0,T]\), let
\begin{align*}
  W^\eps(s,t) & = W^\eps(t) - W^\eps(s),
  &
  \bW^\eps(s,t) &= \int_{s}^{t} W^\eps(r) \otimes d W^\eps(r)
  .
\end{align*}
Here \(\otimes\) denotes the tensor product, i.e.\ if \(A, B \in \bR^n\),
then \(A \otimes B \in \bR^{n \times n}\) is given by \((A \otimes B)_{j,k}
= A_j B_k\).

The pairs \((W^\eps, \bW^\eps)\) are the \emph{canonical lifts} of
the original piecewise smooth paths \(W^\eps\), see
\cite[Section~2]{FrizHairer14}. These pairs belong to the space \(\cC^\alpha\) of
\(\alpha\)-H\"older rough paths with \(\alpha \in (1/3, 1/2)\).

Fix \(\alpha \in (1/3, 1/2)\).
The space \(\cC^\alpha\) consists of pairs 
\((X, \bX) \in C^0([0,T] \times [0,T], \bR^{Nd} \times \bR^{Nd \times Nd})\)
with
\[
  \|(X, \bX)\|_\alpha
  = |X|_\alpha + \sqrt{|\bX|_{2\alpha}}
  < \infty,
\]
where
\[
  |X|_\alpha
  = \sup_{s \neq t \in [0,T]} \frac{|X(s,t)|}{|t-s|^\alpha}
  \qquad \text{and} \qquad
  |\bX|_{2\alpha}
  = \sup_{s \neq t \in [0,T]} \frac{|\bX(s,t)|}{|t-s|^{2\alpha}}
  .
\]
Further we require that $(X, \bX)$ satisfies \(X(t,t) = 0\) and \(\bX(t,t) = 0\) for all
\(t\), and
\[
  \bX_{s,t} - \bX_{s,u} - \bX_{u,t}
  = X_{s,u} \otimes X_{u,t}
  .
\]
The above is known as the ``Chen's relation''. It holds for and is inspired by
the canonical lifts of smooth paths such as $(W^\eps, \bW^\eps)$.
Because of the Chen's relation, \(\cC^\alpha\) is not a linear subspace
of \(C^0\).

The topology on \(\cC^\alpha\) is inherited from the seminorm \(\| \cdot \|_\alpha\) on
\(C^0\): it is given by the distance
\[
  d_{\cC^\alpha} ((X, \bX), (Y, \bY))
  = \sup_{s \neq t \in [0,T]} \frac{|X(s,t) - Y(s,t)|}{|t-s|^\alpha}
  + \sup_{s \neq t \in [0,T]} \frac{|\bX(s,t) - \bY(s,t)|}{|t-s|^{2\alpha}}
  .
\]

A key result in rough paths is 
the It\^o-Lyons continuity theorem~\cite[Section~8]{FrizHairer14}.
Applied to the differential equation~\eqref{eq:jahsd},
it gives a \emph{continuous} map
\(\Gamma \colon \cC^\alpha \to C^0([0,T], \bR^{Nd})\), 
such that \(\Gamma(W^\eps, \bW^\eps) = u^\eps\) for all \(\eps\).

\begin{rmk}
  Our application of the It\^o-Lyons continuity theorem requires that the function
  $A(u)$ in~\eqref{eq:jahsd} has bounded continuous derivatives up to the second order.
  Our function \(A(u) = I - \frac{u u^*}{U}\) is not bounded on \(\bR^{Nd}\).
  However, all the solutions $u^\eps$ are uniformly bounded, namely restricted to
  the sphere $|u^\eps|^2 = U$, where the global unboundedness of $A(u)$ or
  its derivatives does not cause problems.
\end{rmk}

Suppose now that the random elements \((W^\eps, \bW^\eps)\) converge weakly
in \(\cC^\alpha\) to some \((W, \bW)\). Then, using the continuous mapping
theorem, we find \(u^0 = \Gamma(W, \bW)\). 

Further, in Section~\ref{sec:iwip}, we show that \((W^\eps, \bW^\eps)\) do indeed converge to a
\(\cC^\alpha\)-valued random process \((W, \bW)\), where \(W\) is the Brownian motion
with covariance matrix \(2 \lambda^{-1} d^{-1} I\) and,
using \(\circ dW\) to denote Stratonovich integration,
\begin{align*}
  W(s,t) &= W(t) - W(s),
  &
  \bW(s,t) &= \int_s^t W(r) \otimes \circ dW(r)
  .
\end{align*}

\begin{rmk}
  A rough path $(W, \bW)$, where $W$ is a Brownian motion,
  is often referred to as \emph{enhanced Brownian motion.}
  In our case, the enhancement is Stratonovich. Often it is
  natural to consider the It\^o enhancement; in general,
  the options for enhancement are plentiful.
\end{rmk}

Rough integration against \((W,\bW)\) coincides with Stratonovich integration
against \(W\) (see~\cite{FrizHairer14}).
This means that \(u^0 = \Gamma(W, \bW)\)
is the solution of the Stratonovich stochastic differential equation
\(du^0 = A(u^0) \, \circ dW\).
The convergence is on the time interval \([0,T]\), but \(T\) is arbitrary, so convergence on
\([0, \infty)\) follows.

In other words, we obtain the following result:

\begin{thm}
  \label{thm:u}
  As \(\eps \to 0\), the processes \(u^\eps\) converge weakly
  in the \(C^0([0,\infty), \bR^{Nd})\) topology to \(u^0\), the
  solution of Stratonovich differential equation
  \begin{equation}
    \label{eq:jaao}
    d u^0
    = \Bigl( I - \frac{u^0 (u^0)^*}{U} \Bigr) \, \circ d W
    , \qquad u^0(0) = \xi
    .
  \end{equation}
  Here \(W\) is a \(dN\)-dimensional Brownian motion with covariance matrix
  \(2 \lambda^{-1} d^{-1} I\).
\end{thm}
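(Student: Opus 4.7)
The plan is to interpret~\eqref{eq:jahsd} as an ODE driven by the piecewise-smooth path $W^\eps$, and to pass to the limit $\eps \to 0$ using the theory of rough paths. A naive limit fails because $W^\eps$ converges to a Brownian motion, which is nowhere differentiable; the limiting equation must be reinterpreted, and rough paths provide exactly the topology in which the solution map is continuous, enabling application of the continuous mapping theorem.

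First I would appeal to the It\^o--Lyons continuity theorem. Fix $\alpha \in (1/3, 1/2)$ and $T > 0$, and consider the canonical rough-path lift $(W^\eps, \bW^\eps) \in \cC^\alpha$ of the piecewise smooth $W^\eps$. The coefficient $A(u) = I - uu^*/U$ is smooth, and since $A(u)$ is the orthogonal projection onto the tangent space of the sphere $\{|u|^2 = U\}$, a direct check shows that $u^\eps$ stays on this sphere; hence $A$ can be smoothly truncated outside a neighborhood of the sphere without changing solutions, yielding a globally $C^3_b$ vector field. This makes the rough-path solution map $\Gamma\colon \cC^\alpha \to C([0,T], \bR^{Nd})$ globally defined and continuous, with $\Gamma(W^\eps, \bW^\eps) = u^\eps$ for every $\eps > 0$.

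The main technical step is to prove $(W^\eps, \bW^\eps) \to (W, \bW)$ weakly in $\cC^\alpha$, where $W$ is the stated Brownian motion and $\bW$ is the Stratonovich iterated integral. Each $\phi_k$ is a Haar-driven Markov jump process on $SO(d)$, independent across $k$: at each Poisson jump it is multiplied by an independent Haar element, so the Haar measure is invariant and exponentially mixing at rate $\lambda$ (one jump fully refreshes the distribution). A Green--Kubo computation
\[
\int_{-\infty}^\infty \bE\bigl[(\phi_k^*(0) \hn) \otimes (\phi_k^*(s) \hn)\bigr] \, ds = 2 \lambda^{-1} d^{-1} I
\]
identifies the covariance of the Brownian limit of $W^\eps$. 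Joint convergence of iterated integrals and tightness in the $\alpha$-H\"older rough-path topology then require Kolmogorov-type moment bounds of the form $\bE|W^\eps(s,t)|^p \lesssim |t-s|^{p/2}$ and $\bE|\bW^\eps(s,t)|^p \lesssim |t-s|^p$ for sufficiently large $p$, together with an identification of the limit of $\bW^\eps$ that captures the Stratonovich correction naturally arising from the smoothness of the approximations $W^\eps$ (a Wong--Zakai type phenomenon). I expect this to be the hardest part of the argument, and it is the step carried out in Section~\ref{sec:iwip} by adapting the Kelly--Melbourne framework~\cite{KM16}.

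Once this convergence is in place, the continuous mapping theorem applied to $\Gamma$ gives $u^\eps = \Gamma(W^\eps, \bW^\eps) \to \Gamma(W, \bW) =: u^0$ weakly in $C([0,T], \bR^{Nd})$. By the standard correspondence between rough integration against a Stratonovich-enhanced Brownian motion and Stratonovich SDEs, the limit $u^0$ satisfies~\eqref{eq:jaao}. Since $T$ is arbitrary, weak convergence on $[0,\infty)$ in the topology of uniform convergence on compact sets follows by a routine concatenation argument.
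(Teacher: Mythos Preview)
Your proposal is correct and follows essentially the same route as the paper: lift $W^\eps$ to its canonical rough path, invoke the It\^o--Lyons continuity theorem, prove weak convergence of $(W^\eps,\bW^\eps)$ to the Stratonovich-enhanced Brownian motion via the Kelly--Melbourne machinery (deferred to Section~\ref{sec:iwip}), and conclude by the continuous mapping theorem. Your explicit observation that $u^\eps$ is confined to the sphere, so that $A$ may be truncated to a globally $C^3_b$ coefficient, is a useful detail that the paper leaves implicit.
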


\begin{rmk}
  The process \eqref{eq:jaao} is a diffusion on the sphere
  \(|u^0|^2 = U\). When $Nd$ is large and $U = Nd$, coordinate projections of
  $u^0$ are close to an Ornstein-Uhnelbeck process. We provide the details in
  Appendix~\ref{sec:PSD}.
\end{rmk}

\begin{rmk}
  Theorem~\ref{thm:main} is a corollary of Theorem~\ref{thm:u}. 
\end{rmk}


\section{Convergence of rough paths}
\label{sec:iwip}

In Section~\ref{sec:pf-main} we introduced the random rough paths 
\((W^\eps, \bW^\eps)\) and \((W, \bW)\). 
To prove Theorem~\ref{thm:u},
it remains to show that \((W^\eps, \bW^\eps)\) converge weakly to
\((W, \bW)\) in the space \(\cC^\alpha\) of \(\alpha\)-H\"older rough paths,
with $\alpha \in (1/3, 1/2)$. This is the goal of this section.

We follow the arguments for deterministic dynamical systems by
Kelly and Melbourne~\cite{KM16}. Our situation is simpler
because of randomness, but \cite{KM16} does not cover it, so we
provide an adaption of their proof.

\begin{rmk}
  Writing $W^\eps(t) = W^\eps(0,t)$ and $\bW^\eps(t) = \bW^\eps(0,t)$,
  we can work with the rough paths \((W^\eps, \bW^\eps)\) as a random processes,
  defined on $t \in [0,\infty)$. The ``increments'' $W^\eps(s,t)$ and $\bW^\eps(s,t)$
  can be recovered from the Chen's relation:
  \begin{align*}
    W^\eps(s,t) & = W^\eps(t) - W^\eps(s)
    , \\
    \bW^\eps(s,t) &= \bW^\eps(t) - \bW^\eps(s) 
      - W^\eps(s) \otimes W^\eps(t)
    .
  \end{align*}
\end{rmk}

To prove convergence of \((W^\eps, \bW^\eps)\) to \((W, \bW)\)
in the $\cC^\alpha$ topology, it is sufficient to show
convergence in the weaker uniform topology together with
suitable moment bounds (c.f.~\cite[Theorem~9.1]{KM16}):

\begin{lemma}
  \label{lm:prer}
  As $\eps$ goes to zero, the process
  $(W^\eps, \bW^\eps)$ converges weakly to
  $(W, \bW)$ in $C^0([0, \infty), \bR^{Nd} \times \bR^{Nd \times Nd})$.
  
  Moreover, there exist $q>3$ and $C>0$ such that
  for all $\eps > 0$ and $0 \leq s \leq t$,
  \begin{equation} \label{eq:anya}
    \|W^\eps (s,t)\|_{2q} \leq C |t-s|^{1/2}
    \qquad \text{and} \qquad
    \|\bW^\eps (s,t)\|_q \leq C |t-s|
    .
  \end{equation}
  Here \(\| X \|_q = (\bE |X|^q )^{1/q}\) is the \(L^q\) norm.
\end{lemma}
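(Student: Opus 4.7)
The plan is to exploit the independence built into the model and reduce to a one-particle analysis, then apply classical functional central limit theorems. Since the rotations $g_k^n$ and Poisson clocks $\tau_k^n$ for different $k$ are independent, the $W_k^\eps$ are mutually independent $d$-dimensional processes. Consider a single $k$ and drop the subscript. Between consecutive collisions $\tau^n, \tau^{n+1}$ the vector $\phi^*(s)\hn$ is constant, and for $n\ge 1$ the value $\phi^*(\tau^n)\hn = (g^1)^*(g^2)^*\cdots(g^n)^*\hn$ is uniform on $S^{d-1}$ and independent of everything up to time $\tau^n$. Hence, up to an $\eps$-negligible boundary contribution, $\Phi(t)$ is the partial sum of IID increments $X_n = \sigma_n u_n$ with $\sigma_n \sim \mathrm{Exp}(\lambda)$ and $u_n$ uniform on $S^{d-1}$; each $X_n$ is mean-zero with covariance $\bE[\sigma_n^2]\bE[u_n u_n^*] = (2/(\lambda^2 d))I$.

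First I apply Donsker's invariance principle (for compound Poisson sums) to obtain the weak convergence of $W^\eps$ in the uniform topology to a $d$-dimensional Brownian motion with covariance $\lambda\cdot (2/(\lambda^2 d))I = (2/(\lambda d)) I$; independence across $k$ upgrades this to joint convergence with the stated block-diagonal covariance. For $\bW^\eps$, I split into blocks indexed by $(k,l)$. For $k\neq l$ the factors $W_k^\eps$ and $W_l^\eps$ are independent; conditioning on one and applying a martingale FCLT to the integral against the other yields convergence to $\int W_k \otimes dW_l$, which coincides with the Stratonovich integral by independence. For the diagonal block, integration by parts already forces $\bW_{kk}^\eps(s,t)+\bW_{kk}^\eps(s,t)^* = W_k^\eps(s,t)\otimes W_k^\eps(s,t)$, so only the L\'evy area needs genuine analysis; here I adapt the rough-path limit theorem of Kelly--Melbourne to the Markov chain $\phi^*\hn$ on $S^{d-1}$. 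The key computation is that in stationarity $\bE[\phi^*(0)\hn \otimes \phi^*(s)\hn] = e^{-\lambda|s|}d^{-1} I$, a symmetric matrix for every $s$. Consequently the antisymmetric part of $\int_0^\infty \bE[\phi^*(0)\hn\otimes\phi^*(s)\hn]\,ds$ vanishes, no drift correction appears, and the limiting enhancement is exactly the Stratonovich lift claimed in Section~\ref{sec:LB}.

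For the moment bounds, I condition on the number $M$ of jumps in $[\eps^{-2}s, \eps^{-2}t]$, which is Poisson with mean $\lambda\eps^{-2}|t-s|$. Applying Rosenthal's inequality to the IID sum $\sum_{n\le M} X_n$, together with the trivial bound $|\Phi(\eps^{-2}t)-\Phi(\eps^{-2}s)|\le \eps^{-2}|t-s|$ for short intervals, yields $\bE|\Phi(\eps^{-2}t)-\Phi(\eps^{-2}s)|^{2q}\le C_q(\eps^{-2}|t-s|)^q$ for every $q\ge 1$; rescaling by $\eps$ gives $\|W^\eps(s,t)\|_{2q}\le C|t-s|^{1/2}$, in particular for some $q>3$. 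The analogous estimate for $\bW^\eps(s,t)$ is obtained by writing the iterated integral as a sum over disjoint collision intervals of uniformly bounded contributions plus cross terms that factor across independent subintervals, then applying Rosenthal-type $L^q$ bounds to obtain $\|\bW^\eps(s,t)\|_q\le C|t-s|$. The main obstacle in this program is the identification of the L\'evy area limit, specifically the verification that the antisymmetric drift correction vanishes; the remaining steps are either immediate from independence or standard applications of classical inequalities.
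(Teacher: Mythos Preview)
Your approach is genuinely different from the paper's. The paper treats the full $Nd$-dimensional process $\phi^*\hn$ at once: it builds a measure-preserving flow, computes the transfer operator explicitly, writes down the martingale--coboundary decomposition $V = m + \chi\circ F - \chi$ with $\chi(x)=\lambda^{-1}x(0)$, and then invokes Kelly--Melbourne \cite{KM16} wholesale for both the weak convergence and the moment bounds. The Green--Kubo matrices $\tSigma$ and $E$ are computed directly, and the identity $E=\tfrac12\Sigma$ is what makes the It\^o-plus-drift limit collapse to the Stratonovich lift. Your observation that the correlation $\bE[\phi^*(0)\hn\otimes\phi^*(s)\hn]=d^{-1}e^{-\lambda|s|}I$ is a scalar matrix is exactly this computation, seen from a different angle, and your reduction of the diagonal block to the L\'evy area via integration by parts is a nice simplification.

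There is, however, a real gap in your treatment of the off-diagonal blocks. You write that for $k\neq l$ you ``condition on one and apply a martingale FCLT to the integral against the other''. But $W_l^\eps$ is not a martingale; it is a bounded-variation process whose total variation on $[0,T]$ is of order $\eps^{-1}$, so Kurtz--Protter type stability results do not apply directly, and there is no martingale FCLT to invoke. Weak convergence of $(W_k^\eps,W_l^\eps)$ alone does \emph{not} imply convergence of $\int W_k^\eps\otimes dW_l^\eps$ --- this is precisely the failure of continuity that motivates rough paths in the first place. The clean fix is to do what the paper does: apply \cite{KM16} to the full $Nd$-dimensional observable, so that the off-diagonal blocks are handled simultaneously with the diagonal ones; independence then manifests simply as the vanishing of the off-diagonal entries of $\tSigma$ and $E$, and no separate argument is needed.

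A second, smaller gap: your moment bound for $\bW^\eps(s,t)$ is only a sketch. Rosenthal's inequality handles $W^\eps$ cleanly, but ``writing the iterated integral as a sum over disjoint collision intervals \ldots\ plus cross terms'' hides real work --- the cross terms are double sums with $O(M^2)$ terms, not $O(M)$, and one needs either a martingale structure or the explicit martingale--coboundary decomposition (as in \cite[Section~7]{KM16}) to get the correct $|t-s|$ scaling. The paper obtains these bounds as a direct consequence of the $L^\infty$ decomposition $V=m+\chi\circ F-\chi$, then transfers them from the stationary process $\hW^\eps$ to $W^\eps$ via the random time shift $\tau=\max_k\tau_k^1$.
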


Here is an outline of the proof of Lemma~\ref{lm:prer}:
\begin{itemize}
  \item Observe that the processes $W^\eps(t)$ do not have stationary increments.
    This comes from the fact that $\phi_k(0) = I$ instead of being distributed
    uniformly in $O(d)$. This is a minor inconvenience which we solve by introducing
    a random time shift $\tau$ such that the random processes
    $\hW^\eps(t) = W^\eps (\eps^2 \tau + t) - W^\eps(\eps^2 \tau)$
    do have stationary increments.
    We show that the rough paths $(\hW^\eps, \hbW^\eps)$, where $\hbW^\eps$
    are the respective iterated integrals, well approximate
    $(W^\eps, \bW^\eps)$.
    
  \item We represent $\hW^\eps(t) = \eps \int_0^{\eps^{-2} t} h \circ F^s \, ds$,
    where $F^t\colon \Omega \to \Omega$ is a measure
    preserving semiflow on a probability space
    $(\Omega, \bP)$ and $h \colon \Omega \to \bR^{Nd}$ is an observable.
    
  \item We consider a discrete time dynamical system $F \colon \Omega \to \Omega$,
    where $F = F^1$. It preserves the measure $\bP$, and our construction ensures 
    that $F$ is mixing.
    The semiflow $F^t$ is a suspension over $F$ with the roof function equal to $1$.
    We consider the induced observable \(V \colon \Omega \to \bR^{Nd}\),
    \(V = \int_{0}^{1} h \circ F^t \, dt\), and use a martingale-coboundary
    decomposition $V = m + \chi \circ F - \chi$, where both $\chi$ and $m$ are bounded
    and $m$ is a ``martingale part''. 
    This means that for every $n$, the ``backward'' sum $\sum_{j=n-k}^n m \circ F^j$ is a martingale
    on $k=0,\ldots, n$.
    
  \item Let 
    \(
      \tW^\eps (t) 
      = \eps \sum_{j=0}^{\lfloor \eps^{-2} t \rfloor} V \circ F^j
    \) be a discrete time version of $\hW$, and 
    let $\tbW^\eps$ be the corresponding
    iterated integral.
    Then $(\tW, \tbW)$ is a random c\`adl\`ag process.
    By \cite[Theorem~4.3]{KM16}, if $F$ is mixing and $V$ allows
    a martingale-coboundary decomposition as above, then 
    the weak limit of $(\tW^\eps, \tbW^\eps)$ in the $C^0$ topology
    is described by Green-Kubo-like formulas~\eqref{eq:tWlim}.
    
  \item The processes $(\tW^\eps, \tbW^\eps)$, $(\hW^\eps, \hbW^\eps)$
    and $(W^\eps, \bW^\eps)$ are closely
    related, and knowing the weak limit of the first allows us to compute
    the weak limit of the others.
    
  \item The moment bounds for $(\tW^\eps, \tbW^\eps)$ and
    $(\hW^\eps, \hbW^\eps)$, and hence for $(W^\eps, \bW^\eps)$,
    are implications of the martingale-coboundary decomposition~\cite[Section~7]{KM16}.
\end{itemize}

In the rest of this section, we implement the above.

\subsection{Probability measure preserving flow}

Note that the processes \(\phi_k\) are not stationary. For instance,
\(\phi_k(0) = I\) for all \(k\). Let \(\tau = \max \{\tau_k^1 \colon 1 \leq k \leq N \}\)
and \(\hphi_k(t) = \phi_k(t+\tau)\), \(t \geq 0\).
Now, \(\hphi_k\) are stationary processes, and so are
\(\hphi_k^* \hn\).

Let \(\psi\) be a random process with values in \(\bR^{Nd}\), obtained by stacking 
together all the coordinates of \(\hphi_k^* \hn\), \(k=1\ldots N\).

Let \(\Omega = D([0,\infty), \bR^{Nd})\) be the space of c\`adl\`ag functions.

Let \(\bP\) be the probability measure on
\(\Omega\), corresponding to the distribution of 
\(\psi\), and let \(\bE\) denote the corresponding expectation.

Define the flow \(F^t \colon \Omega \to \Omega\) by \((F^t x)(s) = x(t+s)\)
for \(s,t \geq 0\),
and let $h \colon \Omega \to \bR^{Nd}$,
$h(x) = x(0)$ be an observable. 
Since \(\psi\) is a stationary process, the measure \(\bP\) is \(F^t\)-invariant.

Define \(\hW^\eps\) and \(\hbW^\eps\) by
\[
  \hW^\eps(t) = \eps \int_0^{\eps^{-2} t} h \circ F^s \, ds
  \qquad \text{and} \qquad
  \hbW^\eps(t) = \int_0^t \hW^\eps (s) \otimes d\hW^\eps (s)
  .
\]

\begin{rmk}
  \label{rmk:wic}
  Where it is convenient, we assume that \(W^\eps\), \(\hW^\eps\) and \(\tau\)
  are defined on the same probability space such that
  \(W^\eps (\eps^2 \tau + t) - W^\eps(\eps^2 \tau) = \hW^\eps(t)\) for all \(t \geq 0\).
  The iterated integrals \(\bW^\eps\) and \(\hbW^\eps\) are fully determined by
  \(W^\eps\) and \(\hW^\eps\), so they also belong to this probability space.
\end{rmk}

\subsection{Discrete time system}
\label{sec:dts}

Let \(F = F^1\), and \(P \colon L^1(\Omega) \to L^1(\Omega)\) be the
(Ruelle-Perron-Frobenius) \emph{transfer operator}, corresponding to \(F\) and \(\bP\).
Formally, \(P\) is defined by 
\[
  \bE (Pv \, w) = \bE(v \, w \circ F)
  \quad \text{ for all }
  v \in L^1(\Omega)
  \text{ and } 
  w \in L^\infty(\Omega)
  .
\]

\begin{rmk}
  \label{rmk:nnao}
  \(P v\) can be computed explicitly.
  For \(x \in \Omega\),
  \[
    (P v)(x)
    = \bE( v \mid F = x)
    = \int_\Omega v (y) \, d\bP( y \mid Fy = x)
    ,
  \]
  where \(\bP(\cdot \mid F = x)\) is the regular 
  conditional probability
  corresponding to the \emph{observable} \(F \colon \Omega \to \Omega\),
  and \(\bE( \cdot \mid F=x)\) 
  is the corresponding expectation.
  (See \cite{CP97} for a guide on guilt-free manipulation
  with regular conditional probabilities.)
  
  Similarly,
  \(
    (P^k v)(x)
    = \bE( v \mid F^k = x)
  \)
  for \(k \geq 1\).
\end{rmk}

Define \(V \colon \Omega \to \bR^{Nd}\) by 
\(V = \int_{0}^{1} h \circ F^t \, dt\).
Then \(V(x) = \int_{0}^{1} x(t) \, dt\)
for \(x \in \Omega\).
Note that \(V\) is a bounded random variable, and due to the
natural symmetries of our model, \(\bE V = 0\).

\begin{prop} For \(a \in \bR^{Nd}\), 
  \label{prop:mlg}
  \[
    \int_\Omega x(t) \, d\bP(x \mid x(0) = a)
    = e^{- \lambda t} a.
  \]
\end{prop}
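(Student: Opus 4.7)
The plan is to reduce to a per-particle calculation and then exploit that a Haar-uniform rotation kills any fixed vector. The process $\psi$ splits into $N$ blocks $\psi_k(t) = \hphi_k^*(t)\hn$, and these blocks are mutually independent because the Poisson processes $\tau_k$ and the rotation sequences $(g_k^n)_{n \geq 1}$ are independent across $k$. The conditional law $\bP(\cdot \mid x(0)=a)$ therefore factorises across blocks, and it suffices to show $\bE[\hphi_k^*(t)\hn \mid \hphi_k^*(0)\hn = a_k] = e^{-\lambda t}\,a_k$ for each $k$ separately.

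For each $k$, I would condition on the full random element $\hphi_k(0) \in SO(d)$. By the strong Markov property of the Poisson process $\tau_k$ at the stopping time $\tau$, the post-$\tau$ collision times of particle $k$ form a rate-$\lambda$ Poisson process independent of $\hphi_k(0)$, and the associated rotations remain i.i.d.\ Haar on $SO(d)$. Let $N_t$ be the number of such collisions in $[0,t]$ and $g_1, g_2, \ldots$ the associated rotations; then $\hphi_k(t) = g_{N_t}\cdots g_1 \,\hphi_k(0)$, so that
\[
  \bE\bigl[\hphi_k^*(t)\hn \,\big|\, \hphi_k(0)\bigr]
  = \hphi_k^*(0) \sum_{n=0}^{\infty} \bP(N_t = n)\,\bE[g_1^* g_2^* \cdots g_n^*\hn].
\]
The crucial observation is that for $d \geq 2$ and $g$ Haar-uniform on $SO(d)$, invariance of Haar measure implies that $\bE[g^*\hn]$ is fixed by every element of $SO(d)$, hence equals $0$. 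Thus every $n \geq 1$ term vanishes by pulling out $\bE[g_n^*\hn]$ inside, the sum collapses to $\bP(N_t = 0) = e^{-\lambda t}$, and one obtains $\bE[\hphi_k^*(t)\hn \mid \hphi_k(0)] = e^{-\lambda t}\,\hphi_k^*(0)\hn$. Since the right-hand side depends on $\hphi_k(0)$ only through $\psi_k(0) = \hphi_k^*(0)\hn$, the tower property gives the desired per-block identity, and assembling the $N$ blocks yields the proposition.

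The only point demanding care is the strong-Markov justification: the time shift by $\tau = \max_k \tau_k^1$ must produce a genuinely stationary driving structure for each $\hphi_k$, with the post-$\tau$ inter-collision times exponential and the rotations i.i.d.\ Haar, jointly independent of $\hphi_k(0)$. This is standard given that $\tau$ is a stopping time for the joint Poisson filtration and the rotations $g_k^n$ are chosen independently of the collision times, but it deserves to be written out rather than asserted.
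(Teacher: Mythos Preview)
Your proof is correct and follows essentially the same approach as the paper: reduce to a single block, then use that once at least one Haar-uniform rotation has been applied the conditional expectation of $\phi_k^*(t)\hn$ vanishes, so only the no-collision event (probability $e^{-\lambda t}$) survives. The paper's version is terser---it splits directly into the events $\{t<\tau_k^1\}$ and $\{t\geq\tau_k^1\}$ rather than summing over $N_t$, and it does not write out the strong-Markov and tower-property steps that you correctly flag as deserving care.
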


\begin{proof}
  Recall the definition of \(\phi_k\). Note that for \(a' \in O(d)\),
  \begin{align*}
    \bE(\phi_k^*(t) \hn \mid \phi_k(0) = a', \ t < \tau_k^1) & = a'^* \hn,
    \quad \text{and}
    \\
    \bE(\phi_k^*(t) \hn \mid  \phi_k(0) = a', \ t \geq \tau_k^1) & = 0.
  \end{align*}
  Hence
  \[
    \bE( \phi_k^* \hn \mid \phi_k(0) = a')
     = \bP(t < \tau_k^1) \, a'^* \hn
     = e^{-\lambda t} a'^* \hn
     .
  \]
  The result follows.
\end{proof}

\begin{prop} \label{prop:hqst}
  For \(x \in \Omega\) and \(k \geq 0\),
  \[
  (P^k V)(x) 
  = \frac{e^{-\lambda k}(e^{\lambda} - 1)}{\lambda}
    x(0). 
  \]
\end{prop}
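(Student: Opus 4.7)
The plan is to combine the Markov property of $\psi$ with reversibility of its transition kernel to reduce $(P^k V)(x)$ to a one-dimensional integral that can be evaluated explicitly.

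First I would unpack the definitions: since $h(y)=y(0)$, one has $V(y)=\int_0^1 y(t)\,dt$, so by Remark~\ref{rmk:nnao},
\[
  (P^k V)(x)
  = \int_0^1 \bE\bigl[y(t) \mid y(k+s) = x(s) \text{ for all } s \geq 0\bigr]\,dt.
\]
Each coordinate $\hphi_j^* \hn$ of $\psi$ is a Markov jump process on $S^{d-1}$ which, at each event of an independent rate-$\lambda$ Poisson clock, is reset to a point drawn uniformly from $S^{d-1}$ independently of the past; this independence follows from a short computation using uniformity of $g_k^n$ on $SO(d)$. Distinct coordinates are independent. For $0 \leq t \leq k$, the Markov property therefore collapses the conditioning to the single value $y(k)=x(0)$:
\[
  \bE\bigl[y(t) \mid y(k+s)=x(s),\, s \geq 0\bigr]
  = \bE\bigl[y(t) \mid y(k) = x(0)\bigr].
\]

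Next I would invoke reversibility. The explicit transition kernel $P_s(z, dz') = e^{-\lambda s}\delta_z(dz') + (1-e^{-\lambda s})\mu(dz')$, with $\mu$ uniform on $S^{d-1}$, is symmetric in $(z, z')$ and hence reversible with respect to the stationary measure $\mu$. Together with stationarity of $\psi$, reversibility lets me flip time in the conditional expectation:
\[
  \bE\bigl[y(t) \mid y(k)=a\bigr]
  = \bE\bigl[y(k-t) \mid y(0)=a\bigr]
  = e^{-\lambda(k-t)}\,a,
\]
the last equality being Proposition~\ref{prop:mlg}. Integrating in $t$ over $[0,1]$ then yields
\[
  (P^k V)(x) = x(0)\int_0^1 e^{-\lambda(k-t)}\,dt
  = \frac{e^{-\lambda k}(e^{\lambda}-1)}{\lambda}\,x(0),
\]
as claimed.

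The main (mild) obstacle is justifying the time-reversal cleanly in the c\`adl\`ag setting and handling the Markov reduction for the c\`adl\`ag conditioning on the full tail $\{y(k+s)=x(s):s\geq 0\}$. A largely equivalent alternative avoids explicit reversibility: using the defining identity $\bE[(P^k V)\,w]=\bE[V\,(w\circ F^k)]$ with test functions of the form $w(x)=\phi(x(0))$, expanding $V$, and combining stationarity of $\psi$ with the explicit transition kernel, one directly identifies $(P^k V)(x)$ as a scalar multiple of $x(0)$ with the same coefficient. Either route should go through without unexpected difficulties, the essential input being the Markov/reversible structure of $\psi$ already built into the construction of $\phi_k$.
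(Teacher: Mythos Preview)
Your proposal is correct and follows essentially the same route as the paper: reduce the conditioning on $F^k y = x$ to conditioning on $y(k)=x(0)$ via the Markov property, use time-reversibility of the stationary process to rewrite $\bE[y(t)\mid y(k)=x(0)]$ as $\bE[y(k-t)\mid y(0)=x(0)]$, apply Proposition~\ref{prop:mlg}, and integrate in $t$. Your added detail about the explicit transition kernel $P_s(z,dz')=e^{-\lambda s}\delta_z(dz')+(1-e^{-\lambda s})\mu(dz')$ makes the reversibility step more transparent than the paper's one-line assertion, but the argument is the same.
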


\begin{proof}
  We observe that \((\Omega, \bP)\) is probability space of Markov, stationary
  and time-reversible processes. In particular, for \(x \in \Omega\),
  \(t \in [0,1]\) and \(k \geq 1\),
  \begin{equation}
    \label{eq:an4r}
    \begin{aligned}
      & \int_\Omega y(t) \, d\bP(y \mid F^k y = x)
      = \int_\Omega y(t) \, d\bP(y \mid y(k) = x(0))
      \\ & = \int_\Omega y(k-t) \, d\bP(y \mid y(0) = x(0))
      = e^{-\lambda (k-t)} x(0)
      ,
    \end{aligned}
  \end{equation}
  where in the last step we used Proposition~\ref{prop:mlg}.
  
  Using \eqref{eq:an4r} and Fubini's theorem, write for \(k \geq 1\):
  \begin{align*}
    (P^k V) (x) 
    & = \bE( V \mid F^k = x )
    = \int_\Omega \int_{0}^{1} y(t) \, dt \, d\bP(y \mid F^k y = x)
    \\ & = \int_{0}^{1} \int_\Omega y(t) \, d\bP(y \mid F^k y = x) \, dt 
    = \int_{0}^{1} e^{-\lambda (k-t)} x(0) \, dt 
    = \frac{e^{- \lambda k}(e^{\lambda} - 1)}{\lambda} x(0)
    .
  \end{align*}
\end{proof}

Now we approximate \(V\) by a martingale, following
\cite[Section~4]{KM16}.
Define \(\chi, m \colon \Omega \to \bR^{Nd}\) by
\[
  \chi = \sum_{k=1}^{\infty} P^k V
  \qquad \text{and} \qquad
  V = m + \chi \circ F - \chi.
\]
Using Proposition~\ref{prop:hqst} and the definition of \(V\),
we compute \(\chi\) and \(m\) explicitly: 
\[
  \chi(x) = \frac{1}{\lambda} x(0)
  \qquad \text{and} \qquad
  m(x) = \int_{0}^{1} x(t) \, dt + \frac{x(0)- x(1)}{\lambda}
  .
\]
Clearly \(m,V \in L^\infty (\Omega)\). It is standard that
\(Pm = 0\) (see~\cite[Proposition~4.4]{KM16}).

Let \(V_i\) and \(x_i\) denote the \(i\)-th coordinates of \(V\) and \(x\)
respectively.

\begin{prop}
  \label{prop:najj}
  \[
    \int_\Omega x_i(t) x_j(s) \, d\bP(x)
    = 
    \begin{cases}
      0, & i \neq j \\
      d^{-1} e^{-\lambda |t-s|}, & i = j
    \end{cases}
    .
  \]
\end{prop}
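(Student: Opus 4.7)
The plan is to use stationarity of $\psi$ to reduce to $s = 0$ and $t \geq 0$, and then split according to whether the two coordinates $x_i$ and $x_j$ come from the same particle $\hphi_k^* \hn$ or from different particles. Recall that $x$ is obtained by stacking the $Nd$ scalar processes $(\hphi_k^* \hn)_a$ for $k = 1, \ldots, N$ and $a = 1, \ldots, d$, so each index $i$ corresponds to a pair $(k(i), a(i))$, and the processes for different values of $k$ are mutually independent.

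For the cross-particle case $k(i) \neq k(j)$, I would invoke independence of $\hphi_{k(i)}$ and $\hphi_{k(j)}$ together with the fact that $\bE \hphi_k^*(t) \hn = 0$ for every $t$. This zero-mean property follows immediately from Proposition~\ref{prop:mlg} after integrating over the Haar-uniform law of $\hphi_k(0)$, since $\int_{SO(d)} g^* \hn \, dg = 0$ by left-invariance of Haar measure (valid for $d \geq 2$, which is the standing assumption).

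For the same-particle case $k(i) = k(j) = k$, the key computation is the matrix
\[
  M(t) = \bE\bigl[(\hphi_k^* \hn)(t) \otimes (\hphi_k^* \hn)(0)\bigr], \qquad t \geq 0.
\]
Conditioning on $\hphi_k(0)$ and applying Proposition~\ref{prop:mlg} to the forward-in-time factor gives
\[
  M(t) = e^{-\lambda t} \bE\bigl[\hphi_k(0)^* \hn \hn^* \hphi_k(0)\bigr]
       = e^{-\lambda t} \int_{SO(d)} g^* \hn \hn^* g \, dg.
\]
The remaining Haar integral equals $d^{-1} I$: it commutes with every element of $SO(d)$ by left- and right-invariance, so by Schur's lemma (the standard representation of $SO(d)$ on $\bR^d$ is irreducible for $d \geq 2$) it is a scalar multiple of the identity, and taking the trace gives the constant $d^{-1}$. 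Reading off the $(a(i), a(j))$ entry yields $d^{-1} \delta_{ij} e^{-\lambda t}$, which matches the claim when combined with the cross-particle case. Finally I would extend from $s = 0$, $t \geq 0$ to arbitrary $s, t$ by stationarity and symmetry $(s,t) \leftrightarrow (t,s)$, producing $e^{-\lambda|t-s|}$.

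The only mild obstacle is checking that $\hphi_k(0)$ is genuinely Haar-distributed on $SO(d)$ under $\bP$. This is the whole point of shifting by $\tau = \max_k \tau_k^1$: after $\tau$, each particle has undergone at least one collision, so the accumulated rotation $\phi_k(\tau)$ is a product that includes at least one independent Haar-uniform $g_k^1$, making $\hphi_k(0) = \phi_k(\tau)$ Haar-uniform and independent of the process going forward. Everything else is routine once this is in place.
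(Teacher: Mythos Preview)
Your argument is correct and follows the same skeleton as the paper's: reduce by stationarity, use Proposition~\ref{prop:mlg} to extract the factor $e^{-\lambda|t-s|}$, and then identify the constant $d^{-1}$ and the off-diagonal vanishing by symmetry. The one genuine difference is how that last step is carried out. The paper gets $\bE x_j^2(r)=d^{-1}$ from the pointwise constraint $\sum_{j=1}^{Nd} x_j^2(r)=N$ together with an (unspecified) symmetry, and for $i\neq j$ simply asserts that $x_i,x_j$ are ``independent zero mean observables''---a phrasing that glosses over the same-particle case $k(i)=k(j)$. Your route via the Haar integral $\int_{SO(d)} g^*\hn\hn^* g\,dg = d^{-1}I$ and Schur's lemma handles the diagonal and same-particle off-diagonal entries in one stroke, so it is actually more complete on this point. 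One small care: Proposition~\ref{prop:mlg} as stated conditions on $x(0)$, whereas you condition on the finer $\hphi_k(0)\in SO(d)$; this is harmless because the \emph{proof} of Proposition~\ref{prop:mlg} already establishes the finer conditional expectation, but it is worth saying so explicitly.
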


\begin{proof}
  Let \(r \geq 0\), and let \(x\) be distributed in \(\Omega\) according to \(\bP\).
  Note that \(\sum_{j=1}^{Nd} x_j^2(r) = N\). Then due to the symmetry of the
  distribution \(\bP\),
  \[
    \int_\Omega x_j^2(r) \, d\bP(x) = d^{-1}
    \quad \text{for every } j
    . 
  \]
  Fix \(j\). By Proposition~\ref{prop:mlg},
  \[
    \int_\Omega x_j(0) x_j(r) \, d\bP(x \mid x(0))
    = x_j^2(0) e^{-\lambda r}
    .
  \]

  Without loss of generality suppose that \(s \leq t\).
  By the above, and using the fact that the measure \(\bP\) is stationary,
  write
  \begin{align*}
    \int_\Omega x_j(t) x_j(s) \, d\bP(x)
    & = \int_\Omega x_j(0) x_j(t-s) \, d\bP(x)
    \\ & = \int_\Omega \int_\Omega x_j(0) x_j(t-s) \, d\bP(x \mid x(0)) \, d\bP(x)
    \\ & = \int_\Omega x_j^2(0) e^{-\lambda (t-s)} \, d\bP(x)
    = d^{-1} e^{-\lambda (t-s)}
    .
  \end{align*}

  It remains to show that \(\int_\Omega x_i(t) x_j(s) \, d\bP(x) = 0\) when
  \(i \neq j\).
  For this we use again a symmetry of the distribution $\bP$: for each \(j\), it is invariant
  under transformation \(x_j \mapsto - x_j\).
  Thus with \(i \neq j\),
  \[
    \int_\Omega x_i(t) x_j(s) \, d\bP(x)
    = - \int_\Omega x_i(t) x_j(s) \, d\bP(x)
    = 0
    .
  \]
\end{proof}

\begin{prop} If \(i \neq j\), then \(\bE(V_i \, V_j \circ F^k) = 0\).
  Also,
  \[
    \bE(V_i \, V_i \circ F^k)
    = 
    \begin{cases}
    \frac{(e^{\lambda} - 1)^2 e^{-\lambda (k+1)}}{\lambda^{2} d} 
      , & k \geq 1 \\
    \frac{2(e^{-\lambda} + \lambda - 1)}{\lambda^2 d}
      , & k = 0
    \end{cases}
    .
  \]
\end{prop}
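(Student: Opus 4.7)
The plan is to unpack $V_i \, V_j \circ F^k$ as a double integral of $x_i(t) x_j(s)$ and then invoke Proposition~\ref{prop:najj} to reduce everything to an elementary computation. Since $V(x) = \int_0^1 x(t)\,dt$ and $(F^k x)(s) = x(s+k)$, we have
\[
V_i(x) \, (V_j \circ F^k)(x)
= \int_0^1 \int_0^1 x_i(t) \, x_j(s+k) \, ds \, dt
= \int_0^1 \int_k^{k+1} x_i(t) \, x_j(s) \, ds \, dt.
\]
I would then apply Fubini's theorem (justified by boundedness of $V$) to interchange $\bE$ and the double integral, obtaining
\[
\bE(V_i \, V_j \circ F^k) = \int_0^1 \int_k^{k+1} \bE\bigl(x_i(t) x_j(s)\bigr) \, ds\, dt.
\]

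For $i \neq j$, Proposition~\ref{prop:najj} gives that the integrand vanishes identically, so the integral is $0$. For $i = j$, the integrand equals $d^{-1} e^{-\lambda |s-t|}$, and the rest of the argument is just evaluating this integral in the two regimes.

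When $k \geq 1$, the range of $s$ lies above the range of $t$, so $|s-t| = s - t$ and the integrand factorizes. I would compute
\[
\int_0^1 e^{\lambda t}\, dt \cdot \int_k^{k+1} e^{-\lambda s}\, ds
= \frac{e^\lambda - 1}{\lambda} \cdot \frac{e^{-\lambda k}(1 - e^{-\lambda})}{\lambda}
= \frac{(e^\lambda - 1)^2 \, e^{-\lambda(k+1)}}{\lambda^2},
\]
giving the claimed formula after dividing by $d$. When $k = 0$, the two integration intervals coincide and I would split the square $[0,1]^2$ along the diagonal, using symmetry to write the integral as
\[
2 \int_0^1 \int_0^s e^{-\lambda(s-t)} \, dt \, ds
= \frac{2}{\lambda} \int_0^1 \bigl(1 - e^{-\lambda s}\bigr) ds
= \frac{2(\lambda - 1 + e^{-\lambda})}{\lambda^2},
\]
which produces the $k=0$ formula after the factor $d^{-1}$.

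No serious obstacle arises here: the only place calling for minor care is the $k=0$ case, where the absolute value in the exponent prevents the naive product factorization, and one must split along the diagonal. Everything else is either Fubini or the decorrelation statement already supplied by Proposition~\ref{prop:najj}.
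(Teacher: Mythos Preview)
Your proof is correct and follows exactly the same route as the paper: expand $V_i\,V_j\circ F^k$ as $\int_0^1\int_k^{k+1} x_i(t)x_j(s)\,ds\,dt$, apply Fubini, and invoke Proposition~\ref{prop:najj}. The paper stops at the reduced integral and says ``the result follows,'' whereas you spell out the elementary evaluations for $k\geq 1$ and $k=0$; your added detail is fine and accurate.
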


\begin{proof}
  Use Fubini's theorem and Proposition~\ref{prop:najj}:
  \begin{align*}
    \bE(V_i \, V_j \circ F^k)
    & = \int_\Omega 
      \int_{0}^{1} x_i(t) \, dt 
      \int_{k}^{k+1} x_j(s) \, ds
      \, d\bP(x)
    \\ & = \int_{0}^{1} \int_{k}^{k+1}
      \int_{\Omega} x_i(t) x_j(s) \, d\bP(x) \, ds \, dt
    = \int_{0}^{1} \int_{k}^{k+1}
      d^{-1} e^{-\lambda |s-t|} \, ds \, dt
    .
  \end{align*}
  The result follows.
\end{proof}

Define
\begin{align*}
  \tE & = \sum_{k=1}^{\infty} \bE ( V \otimes V \circ F^k)
  ,
  \\ 
  \tSigma & = \bE(V \otimes V) + \sum_{k=1}^{\infty}
    \bE ( V \otimes V \circ F^k + V \circ F^k \otimes V)
  .
\end{align*}
\begin{cor}
  \label{cor:nnja}
  \(\tE = \frac{1- e^{- \lambda}}{\lambda^2 d} I\)
  and \(\tSigma = \frac{2}{\lambda d} I\). 
\end{cor}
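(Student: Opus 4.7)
\medskip

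\noindent\textbf{Plan.} The statement is a direct bookkeeping consequence of the previous proposition, so the plan is to simply plug in and sum. First I would note that the previous proposition gives $\bE(V_i\,V_j\circ F^k) = 0$ whenever $i \neq j$ (for all $k \geq 0$), which immediately implies that all off-diagonal entries of $\bE(V \otimes V\circ F^k)$ vanish. Hence both $\tE$ and $\tSigma$ are scalar multiples of $I$, and it suffices to compute one diagonal entry of each.

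For $\tE_{ii}$, I would substitute the formula $\bE(V_i\,V_i\circ F^k) = \frac{(e^\lambda-1)^2 e^{-\lambda(k+1)}}{\lambda^2 d}$ for $k\geq 1$ and evaluate the geometric sum
\[
\sum_{k=1}^\infty e^{-\lambda(k+1)} = \frac{e^{-2\lambda}}{1-e^{-\lambda}} = \frac{e^{-\lambda}}{e^\lambda - 1}.
\]
Multiplying by $(e^\lambda-1)^2/(\lambda^2 d)$ collapses to $\frac{(e^\lambda-1)e^{-\lambda}}{\lambda^2 d} = \frac{1-e^{-\lambda}}{\lambda^2 d}$, as claimed.

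For $\tSigma_{ii}$, by symmetry of the scalar product $\bE(V_i \cdot V_i\circ F^k) = \bE(V_i\circ F^k \cdot V_i)$, so
\[
\tSigma_{ii} = \bE(V_i^2) + 2\sum_{k=1}^\infty \bE(V_i\,V_i\circ F^k) = \bE(V_i^2) + 2\tE_{ii}.
\]
Plugging in $\bE(V_i^2) = \frac{2(e^{-\lambda}+\lambda-1)}{\lambda^2 d}$ (the $k=0$ case from the previous proposition) and $\tE_{ii} = \frac{1-e^{-\lambda}}{\lambda^2 d}$ from the first part, the exponentials cancel and leave $\frac{2\lambda}{\lambda^2 d} = \frac{2}{\lambda d}$.

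There is no real obstacle here; the only thing to double-check is that the off-diagonal vanishing extends to both halves of the symmetrized sum defining $\tSigma$, which follows from the fact that $\bE(V_j\,V_i\circ F^k)$ is also $0$ for $i\neq j$ (equivalently, by stationarity, $\bE(V_i \, V_j\circ F^k) = 0$ implies the reversed pairing vanishes as well, or one just applies the previous proposition with indices swapped).
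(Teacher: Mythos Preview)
Your proposal is correct and is exactly the computation the paper has in mind: the corollary is stated without proof as an immediate consequence of the preceding proposition, and your geometric-sum bookkeeping (off-diagonals vanish, diagonals sum as you wrote) is precisely what is implicit there.
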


Define a c\`adl\`ag process
\(
  (\tW^\eps, \tbW^\eps) \colon \Omega \to D([0,\infty), \bR^{Nd} \times
  \bR^{Nd \times Nd} )
\)
by
\[
  \tW^\eps (t) 
  = \eps \sum_{j=0}^{\lfloor \eps^{-2} t \rfloor} V \circ F^j
  \qquad \text{and} \qquad
  \tbW^\eps (t) = \int_{0}^{t} \tW^\eps(s) \otimes d\tW^\eps(s)
  .
\]

By~\cite[Theorem~4.3]{KM16}, \((\tW^\eps, \tbW^\eps)\) converges weakly
to \((\tW, \tbW)\) in 
\(D([0,\infty),\bR^{Nd} \times \bR^{Nd \times Nd})\) in the uniform topology,
where \(\tW\) is the \(N\)-dimensional 
Brownian motion with covariance matrix \(\tSigma\),
and
\begin{equation}
  \label{eq:tWlim}
  \tbW (t) = \int_{0}^{t} \tW \otimes d\tW + \tE t
  .
\end{equation}

\subsection{Continuous time system}
\label{sec:cts}

In this part of the proof we are closely following~\cite[Section~6]{KM16}.

Define \(H \colon \Omega \times [0,1) \to \bR^{Nd}\), 
\(H(x,r) = \int_{0}^{r} x(t) \, dt\).
Let 
\[
  E = \tE + 
    \int_\Omega \int_0^1
      H(x,r) \otimes x(r)
    \, dr \, d \bP(x)
  .
\]

\begin{prop}
  \label{prop:aaji}
  \(
    E = \frac{1}{\lambda d} I
  \).
\end{prop}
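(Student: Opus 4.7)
The plan is to evaluate the integral term in the definition of $E$ directly using Proposition~\ref{prop:najj} and then combine with the expression for $\tE$ from Corollary~\ref{cor:nnja}. The result is diagonal and scalar-valued by the symmetry already exploited in Proposition~\ref{prop:najj}, so only one of the diagonal entries needs to be computed.

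First I would unfold $H$ and use Fubini to write
\[
  \int_\Omega H(x,r) \otimes x(r) \, d\bP(x)
  = \int_0^r \int_\Omega x(t) \otimes x(r) \, d\bP(x) \, dt.
\]
Proposition~\ref{prop:najj} says the integrand equals $d^{-1} e^{-\lambda(r-t)} I$ (off-diagonal entries vanish, and the exponential factor appears because $t \le r$). Performing the $t$-integral gives $d^{-1} \lambda^{-1} (1 - e^{-\lambda r}) I$, and a subsequent integration over $r \in [0,1]$ yields
\[
  \int_\Omega \int_0^1 H(x,r) \otimes x(r) \, dr \, d\bP(x)
  = \frac{1}{\lambda d}\Bigl(1 - \frac{1 - e^{-\lambda}}{\lambda}\Bigr) I
  = \Bigl(\frac{1}{\lambda d} - \frac{1 - e^{-\lambda}}{\lambda^2 d}\Bigr) I.
\]

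Adding $\tE = \frac{1 - e^{-\lambda}}{\lambda^2 d} I$ from Corollary~\ref{cor:nnja}, the two $(1 - e^{-\lambda})/(\lambda^2 d)$ contributions cancel and what remains is $\frac{1}{\lambda d} I$, as claimed.

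There is no real obstacle here; the main thing to be careful about is the direction of the exponential in Proposition~\ref{prop:najj} (it is $e^{-\lambda|t-s|}$, and here $t \le r$, so no absolute value issues arise) and the bookkeeping in combining the two pieces so that the cancellation is transparent.
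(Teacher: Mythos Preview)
Your proof is correct and follows essentially the same route as the paper: unfold $H$, apply Fubini, use Proposition~\ref{prop:najj} to reduce the inner expectation to $d^{-1} e^{-\lambda(r-t)} I$, integrate, and add $\tE$ from Corollary~\ref{cor:nnja}. Your intermediate expression $\frac{1}{\lambda d} - \frac{1-e^{-\lambda}}{\lambda^2 d}$ is just a rewriting of the paper's $\frac{e^{-\lambda} + \lambda - 1}{\lambda^2 d}$, so the arguments are identical.
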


\begin{proof}
  Indeed, using the definition of \(H\), Fubini's theorem
  and Proposition~\ref{prop:najj}, write 
  \begin{align*}
    &\int_\Omega \int_0^1
      H(x,r) \otimes x(r)
    \, dr \, d \bP(x)
    =
    \int_\Omega \int_0^1 \int_0^r 
       x(t) \, dt \otimes x(r) 
    \, dr \, d \bP(x)
    \\ & = 
    \int_0^1 \int_0^r \int_\Omega
       x(t) \otimes x(r) 
    \, d \bP(x) \, dt \, dr
    = 
    \int_0^1 \int_0^r
       d^{-1} e^{-\lambda |t-r|} I
    \, dt \, dr
    = \frac{e^{-\lambda} + \lambda - 1}{\lambda^2 d} I
    .
  \end{align*}
  The result follows from Corollary~\ref{cor:nnja}.
\end{proof}

Recall the definition of \(\hW^\eps\) and \(\hbW^\eps\).
By~\cite[Theorem~6.1]{KM16}, \((\hW^\eps, \hbW^\eps)\) converges weakly to
\((W, \bW)\) in \(C^0([0,\infty),\bR^{Nd} \times \bR^{Nd \times Nd})\),
where \(W\) is the \(N\)-dimensional Brownian motion with covariance 
matrix \(\Sigma = \tSigma\),
and
\(
  \bW (t) = \int_{0}^{t} W \otimes dW + E t
\).
Converting the It\^o integral to Stratonovich, we obtain
\[
  \bW (t) = \int_{0}^{t} W \otimes \circ dW - \frac{1}{2} \Sigma t + E t
  = \int_{0}^{t} W \otimes \circ dW
  .
\]

We modelled the flow \(F^t\) as a suspension 
over \(F\) with the roof function identically equal to \(1\).
Both \(h\) and \(V\) are bounded observables,
and we have the \(L^\infty\) martingale-coboundary decomposition
\(V = m + \chi \circ F - \chi\). Therefore the results 
of~\cite[Subsection~7.2]{KM16} apply. By~\cite[Corollary~7.6]{KM16},
for every \(q > 3\) there exists \(C > 0\) such that
for all \(0 \leq s \leq t\),
\begin{equation}
  \label{eq:mmll}
  \|\hW^\eps (s,t)\|_{2q} \leq C |t-s|^{1/2}
  \qquad \text{and} \qquad
  \|\hbW^\eps (s,t)\|_q \leq C |t-s|.
\end{equation}

\subsection{Completion of the proof of Lemma~\ref{lm:prer}}

We proved Lemma~\ref{lm:prer} for the processes \((\hW^\eps, \hbW^\eps)\)
in place of \((W^\eps, \bW^\eps)\). As in Remark~\ref{rmk:wic}, the two are
related by the time shift
\[
  \hW^\eps(t) = W^\eps (\eps^2 \tau, \eps^2 \tau + t)
  \qquad \text{and} \qquad
  \hbW^\eps(t) = \bW^\eps (\eps^2 \tau, \eps^2 \tau + t)
\]
for all \(t \geq 0\).

It remains to prove the moment bounds~\eqref{eq:anya}, based on~\eqref{eq:mmll},
and to show that \((\hW^\eps, \hbW^\eps)\) and \((W^\eps, \bW^\eps)\)
are close in \(C^0([0,\infty),\bR^{Nd} \times \bR^{Nd \times Nd})\).
This is done in the following two propositions.

\begin{prop}
  For every \(q > 3\) there exists \(C > 0\) such that
  for all \(0 \leq s \leq t\),
  \begin{align}
    \|W^\eps (s,t)\|_{2q} &\leq C |t-s|^{1/2} \label{eq:lmdpc}
    \\
    \|\bW^\eps (s,t)\|_q &\leq C |t-s|  \label{eq:lmdpd}
  \end{align}
\end{prop}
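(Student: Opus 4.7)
The plan is to transfer the bounds~\eqref{eq:mmll} from $(\hW^\eps, \hbW^\eps)$ to $(W^\eps, \bW^\eps)$ using the coupling of Remark~\ref{rmk:wic}: $W^\eps(\eps^2 \tau + t) - W^\eps(\eps^2 \tau) = \hW^\eps(t)$. Two facts make this work: first, by the strong Markov property applied at the stopping time $\tau$, the process $\hW^\eps$ is independent of $\tau$; and second, since $\tau = \max\{\tau_k^1 : 1 \leq k \leq N\}$ is a maximum of $N$ iid exponential random variables, $\tau$ has finite moments of every order, in particular $\|\tau^{1/2}\|_{2q} < \infty$ and $\|\tau\|_q < \infty$.

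Fix $0 \leq s \leq t$ and split into three cases according to the position of $\eps^2 \tau$. In the \emph{late} case $\eps^2 \tau \leq s$, the coupling gives $W^\eps(s,t) = \hW^\eps(s - \eps^2 \tau, t - \eps^2 \tau)$; a change of variables in the iterated integral (using $dW^\eps(r) = d\hW^\eps(r - \eps^2 \tau)$ for $r \geq \eps^2 \tau$) similarly yields $\bW^\eps(s,t) = \hbW^\eps(s - \eps^2 \tau, t - \eps^2 \tau)$. Conditioning on $\tau$ and applying~\eqref{eq:mmll} to an increment of deterministic length $t - s$ produces the desired bounds on this event.

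In the \emph{early} case $t \leq \eps^2 \tau$, I use the trivial pointwise estimate $|h| \leq \sqrt{N}$, which yields $|W^\eps(s,t)| \leq C \eps^{-1}(t-s)$ and $|\bW^\eps(s,t)| \leq C \eps^{-2}(t-s)^2$. Combining these with $t - s \leq \eps^2 \tau$ gives the improved bounds
\[
  |W^\eps(s,t)| \leq C \tau^{1/2} (t-s)^{1/2}, \qquad |\bW^\eps(s,t)| \leq C \tau (t-s),
\]
and taking $L^{2q}$ and $L^q$ norms finishes this case. In the \emph{straddling} case $s < \eps^2 \tau < t$, I split at $\eps^2 \tau$ and apply the early-case bound on $[s, \eps^2 \tau]$ and the late-case bound on $[\eps^2 \tau, t]$, using $\eps^2 \tau - s \leq t - s$ and $t - \eps^2 \tau \leq t - s$ throughout. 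For $W^\eps$ the additive decomposition $W^\eps(s,t) = W^\eps(s, \eps^2 \tau) + W^\eps(\eps^2 \tau, t)$ is immediate; for $\bW^\eps$, Chen's relation gives
\[
  \bW^\eps(s,t) = \bW^\eps(s, \eps^2 \tau) + \bW^\eps(\eps^2 \tau, t) + W^\eps(s, \eps^2 \tau) \otimes W^\eps(\eps^2 \tau, t),
\]
and the cross term is handled by H\"older's inequality in $L^q$, combining the $L^{2q}$ estimates already established for each factor.

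The only genuinely delicate point is the use of~\eqref{eq:mmll} with random endpoints $s - \eps^2 \tau$ and $t - \eps^2 \tau$. This is resolved by first conditioning on $\tau$: since $\hW^\eps$ is independent of $\tau$, the conditional $L^{2q}$ norm equals the unconditional one, and~\eqref{eq:mmll} applies to the deterministic length $t - s$. Assembling the three cases (and a harmless union of the $\mathbf{1}_{\text{case }i}$ indicators) yields~\eqref{eq:lmdpc} and~\eqref{eq:lmdpd}.
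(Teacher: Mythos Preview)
Your proof is correct and follows essentially the same route as the paper's own argument: the same three-case split according to the position of $\eps^2\tau$ relative to $[s,t]$, the same pointwise Lipschitz bound in the early case combined with $t-s \leq \eps^2\tau$, the same identification $W^\eps(s,t) = \hW^\eps(s-\eps^2\tau, t-\eps^2\tau)$ (and likewise for the iterated integral) in the late case, and Chen's relation plus H\"older for the cross term in the straddling case. Your explicit justification of the random-endpoint issue via conditioning on $\tau$ and independence of $\hW^\eps$ from $\tau$ is a point the paper leaves implicit.
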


\begin{proof}
  As in Remark~\ref{rmk:wic}, we assume that \(\tau\), \(W^\eps\) and 
  \(\hW^\eps\) are defined on the same probability space such that
  \(
    W^\eps(\eps^2 \tau + t) - W^\eps(\eps^2 \tau)
    = \hW^\eps (t)
  \)
  for all \(t \geq 0\),
  and \(\hW^\eps\) is independent from \(\tau\).
  
  Note that \(\|\tau\|_q\) is finite for every \(q \geq 1\). 
  
  Assume that \(0 \leq s \leq t\).
  First we show~\eqref{eq:lmdpc}.
  We consider three cases:
  \begin{itemize}
    \item[(a)] If \(s \leq t \leq \eps^2 \tau\), then
      \(|W^\eps(s,t) |\ll \eps^{-1} |t-s| \ll \tau^{1/2} |t-s|^{1/2}\),
      thus
      \[
        \| W^\eps(s,t) 1_{s \leq t \leq \eps^2 \tau}\|_{2q} \ll |t-s|^{1/2}
        .
      \]
    \item[(b)] If \(\eps^2 \tau \leq s \leq t\), then
      \(W^\eps(s,t) = \hW^\eps(s-\eps^2 \tau, t-\eps^2 \tau)\),
      and by~\eqref{eq:mmll},
      \[
        \|W^\eps(s,t) 1_{\eps^2 \tau \leq s \leq t}\|_{2q} \ll |t-s|^{1/2}
        .
      \]
    \item[(c)] If \(s \leq \eps^2 \tau \leq t\), then
      \(W^\eps(s,t) = W^\eps(s,\eps^2 \tau) + W^\eps(\eps^2 \tau, t)\) and
      by (a) and (b),
      \begin{align*}
        \|W^\eps(s,t) 1_{s \leq \eps^2 \tau \leq t} \|_{2q}
        & \leq
          \|W^\eps(s,\eps^2 \tau ) 1_{s \leq \eps^2 \tau \leq t} \|_{2q}
        + \|W^\eps(\eps^2 \tau, t) 1_{s \leq \eps^2 \tau \leq t} \|_{2q}
        \ll |t-s|^{1/2}
        .
      \end{align*}
  \end{itemize}
  The bound~\eqref{eq:lmdpc} follows from the above. 
    
  The bound~\eqref{eq:lmdpd} is proved similarly, we consider the same three
  cases:
  \begin{itemize}
    \item[(a)] Suppose that \(s \leq t \leq \eps^2 \tau\).
      Observe that the variation of \(W^\eps\) on the interval \((t,s)\) is
      \(O(\eps^{-1} |t-s|)\), and \(|W^\eps(s,r)| \ll \eps^{-1} |t-s|\) for \(s \leq r \leq t\).
      Then
      \[
        | \bW^\eps(s,t) |
        = \Bigl| \int_s^t W^\eps(s,r) \otimes dW^\eps(r) \Bigr|
        \ll \eps^{-2} |t-s|^2
        \leq \tau |t-s|
        .
      \]
      Thus
      \(
        \|\bW^\eps(s,t) 1_{s \leq t \leq \eps^2 \tau} \|_q
        \ll \|\tau\|_q |t-s| \ll |t-s|
        .
      \)
    \item[(b)] If \(\eps^2 \tau \leq s \leq t\), then
      \(\bW^\eps(s,t) = \hbW^\eps(s-\eps^2 \tau, t-\eps^2 \tau)\), so
      \(
        \|\bW^\eps(s,t) 1_{\eps^2 \tau \leq s \leq t}\|_q
        \ll |t-s|
      \)
      by~\eqref{eq:mmll}.
    \item[(c)] Suppose that \(s \leq \eps^2 \tau \leq t\). It is convenient to use Chen's relation
      \[
        \bW^\eps(s,t) - \bW^\eps(s,\eps^2 \tau) - \bW^\eps(\eps^2 \tau,t) = W^\eps(s,\eps^2 \tau) \otimes W^\eps(\eps^2 \tau,t)
        .
      \]
      Estimate
      \begin{align*}
        \|\bW^\eps(s,\eps^2 \tau) 1_{s \leq \eps^2 \tau \leq t} \|_q & \ll |t-s|
        && \text{by (a),} \\
        \|\bW^\eps(\eps^2 \tau,t) 1_{s \leq \eps^2 \tau \leq t} \|_q & \ll |t-s|
        && \text{by (b),} \\
        \| W^\eps(s,\eps^2 \tau) \otimes W^\eps(\eps^2 \tau,t) 1_{s \leq \eps^2 \tau \leq t} \|_q & \ll |t-s|
        && \text{by H\"older's inequality and \eqref{eq:lmdpc}.}
      \end{align*}
      The bound \(\|\bW^\eps(s,t) 1_{s \leq \eps^2 \tau \leq t} \|_q \ll |t-s|\) follows.
  \end{itemize}
  The relation~\eqref{eq:lmdpd} follows. The proof is complete.
\end{proof}

\begin{prop}
  There exists \(C > 0\) such that for all $t \geq 0$,
  \begin{itemize}
    \item[(a)] \(\bigl| W^\eps(t) - \hW^\eps(t) \bigr| \leq C \eps \tau \),
    \item[(b)] \(\bigl| \bW^\eps(t) - \hbW^\eps(t) \bigr| \leq C \eps^2 \tau^2\).
  \end{itemize}
  As a consequence, the processes \((W^\eps, \bW^\eps)\) and \((\hW^\eps, \hbW^\eps)\)
  converge to the same limit in $C^0$.
\end{prop}

\begin{proof}
  By construction of \(W^\eps(s,t)\) and \(\bW^\eps(s,t)\), for all $0 \leq s \leq t$,
  \[
    |W^\eps(s,t)| \ll \eps^{-1} |t-s|
    \qquad \text{and} \qquad
    |\bW^\eps(s,t)| \ll \eps^{-2} |t-s|^2
    .
  \]
  Similar bounds hold for \(\hW^\eps\) and \(\hbW^\eps\).
  The result for \(t \leq \eps^2 \tau\) follows directly from the above.
  Suppose that \(t \geq \eps^2 \tau\). Then
  \[
    \bigl| W^\eps(t + \eps^2 \tau) - \hW^\eps(t) \bigr|
    = \bigl| W^\eps(\eps^2 \tau) \bigr|
    \ll \eps \tau
    .
  \]
  By Chen's relation,
  \[
    \bigl| \bW^\eps(t + \eps^2 \tau) - \hbW^\eps(t) \bigr|
    = \bigl| \bW^\eps(\eps^2 \tau) 
      + W^\eps(\eps^2 \tau) \otimes W^\eps(\eps^2 \tau, t + \eps^2 \tau)
    \bigr|
    \ll \eps^2 \tau^2
    .
  \]
  Similarly one shows that
  \[
    \bigl|W^\eps(t) - W^\eps(t + \eps^2 \tau)\bigr|
    \ll \eps \tau
    \qquad \text{and} \qquad
    \bigl|\bW^\eps(t) - \bW^\eps(t + \eps^2 \tau)\bigr|
    \ll \eps^2 \tau^2
    .
  \]
  The result follows.
\end{proof}

\section{A Heuristic Analysis}
\label{sec:RPV}

In this section we give a heuristic derivation of our result based on 
studying directly the distribution function. This derivation works also for 
more general collision models, see~\cite{BCKLs}. To extend the present result 
to those models requires the analysis with noise that is not independent from 
the slow variables. This is still a vastly unexplored area.

As before, we consider only the distribution of velocities, which is 
independent of the positions, and the electric field is $E = \eps \hn$.
Let \(p = (p_1, \ldots, p_N)\) denote velocities of the $N$ particles and
let \(F_t^\eps(p)\) be the density of the velocity distribution 
at time \(t\). From~\eqref{eq:gt0} we get
\begin{align*}
  \partial_t F_t^\eps(p)
  & + \sum_i \nabla_{p_i} \Bigl(
     \Bigl[E-\frac{\sum_k E \cdot p_k}{U} p_i\Bigr]F_t^\eps(p)
  \Bigr)
  \\ & = \lambda \sum_i\int_{S^{d-1}}
    \bigl(F_t^{\eps}(p_1, \ldots, |p_i|\omega,\ldots,p_N)-F_t^{\eps}(p)\bigr)
    \, d\sigma(\omega)
    ,
\end{align*}
where $\sigma(\omega)$ is the normalized volume measure on $S^{d-1}$. We 
write the above equation as
\[
 \partial_t F_t^\eps+\eps{\cB} F_t^\eps=\cA F_t^\eps.
\]
Rescaling time as $\tF_t^\eps(v)=F_{t/\eps^2}^\eps(v)$ and assuming that 
\[
  \tF_t^\eps(p)=\tF_t^0(p)+\eps \widetilde 
  F_t^{(1)}(p)+\eps^2 \tF_t^{(2)}(p) + o(\eps^2)
  ,
\]
we get, collecting the coefficients of powers of \(\eps\),
\begin{align}
  0 & = \cA \tF_t^0  \label{one}
  \\
  \cB \tF_t^0 
  & = \cA \tF_t^{(1)}\label{two}
  \\
  \dot {\tF_t^0}- \cB \tF_t^{(1)} 
  & = \cA \tF_t^{(2)} \label{three}
\end{align}
where the dot indicates differentiation with respect to $t$.
From \eqref{one} it follows that $\tF_t^0$ depends only on $v_k$ 
while substituting \eqref{two} into \eqref{three} gives 
\begin{equation}
  \label{formal}
  \dot {\tF_t^0}=P^\perp \cB \cA^{-1} \cB \tF_t^0
  ,
\end{equation}
where $P^\perp$ is the orthogonal projection from $L^2(S^{Nd-1})$ to the kernel 
$H_0$ of $\cA$. Observe that $\cA^{-1} \cB$ is well defined: the image 
of $\cB$ is contained in $H_0^\perp$. Writing \eqref{formal} explicitly 
gives

\begin{equation}\label{mas}
  \dot{\tF_t^0}(v)
  = - \delta \sum_i 
    \frac{\partial}{\partial v_i}
    \Bigl(
      \Bigl[ \frac{d-1}{2v_i}-\frac{(Nd-1)v_i}{U} \Bigr]
      \tF_t^{0}(v)
    \Bigr)
    + \frac{\delta}{2} \sum_{i,j} 
      \frac{\partial^2}{\partial v_i \partial v_j}
      \Bigl(
        \Bigl[\delta_{i,j}-\frac{v_i v_j}{U}\Bigr]
        \tF^0_t(v)
      \Bigr)
  ,
\end{equation}
where $\delta=2\lambda^{-1} d^{-1}$. Equation~\eqref{mas} is the Master Equation for 
the SDE \eqref{SDE}. 

Let $f^{\eps,N}_t(p_1)$ be the one particle marginal of $F_t^\eps(p)$,
\[
 f^{\eps,N}_t(p_1)=\int F_t^\eps(p_1,p_2,\ldots,p_N) \, dp_2 \cdots dp_N
 .
\]
In \cite{BCELM} it was shown that the limit
\[
 f^{\eps}_t(p)=\lim_{N\to\infty} f^{\eps,N}_t(p_1)
\]
satisfies the Boltzmann-Vlasov equation
\[
  \dot{f}^{\eps}_t(p)
  + \nabla_{p} \Bigl(
    \Bigl[E-\frac{E j(t)}{U}p\Bigr]
    f_t^\eps(p)
  \Bigr)
  = \lambda^{-1}
  \int_{S^{d-1}} \bigl( f_t^{\eps}(|p|\omega) - f_t^{\eps}(p) \bigr)
  \, d\sigma(\omega)
  ,
\]
where $j(t)$ is fixed by the \emph{self-consistent} condition
\[
 j(t)=\int_{\bR^d} p f^\eps_t(p)\, dp
 .
\]
We can repeat the above scaling analysis by setting
$\tf^{\eps}_t(p)=f^{\eps}_{t/\eps^2}(p)$ and taking the van Hove limit
\[
 \tf^0_t(p)=\lim_{\eps\to 0} f^\eps_t(p)
 .
\]
A formal perturbative argument very similar to the one used for $F_t^\eps(p)$ 
gives that $\tf^0(p)$ depends only on $v=|p|$ and that it satisfies
\begin{equation}\label{OU}
  \dot{\tf}^0_t(v)
  =\delta \frac{d}{dv} (v\tf^0_t(v))
  + \frac{\delta}{2} \frac{d^2}{dv^2} \tf^0_t(v)
  \, .
\end{equation}
In Appendix~\ref{sec:PSD} we show that for large $N$,
the speed of an individual particle in~\eqref{SDE} is close to 
an Ornstein-Uhlenbeck process whose Fokker-Planck equation is~\eqref{OU}. Thus 
taking the van Hove scaling $E\to 0$ and then the large system limit 
$N\to\infty$ is, at least formally, equivalent to taking the large system limit 
before the van Hove scaling. In this sense the van Hove scaling studied in this 
paper is consistent, at least at a formal level, with the large $N$ limit 
studied in \cite{BCELM}.

\appendix

\section{Continuity of solution map of differential equations}
  \label{sec:CDESM}

  Suppose that $x(t)$, $0 \leq t \leq 1$, is a continuously differentiable path in $\bR^2$,
  and that $A \colon \bR^2 \to \bR^{2 \times 2}$ is a smooth matrix-valued function.
  Let $y$ be a solution of an integral equation
  \begin{equation}
    \label{eq:mmaa}
    y (t) = \int_0^t A(y(s)) \, dx(s)
    .
  \end{equation}
  The integral above is understood in the Riemann-Stieltjes sense,
  and $y$ is uniquely defined.

  Let \(\Gamma \colon C^1([0,1], \bR^2]) \to C^0([0,1], \bR^2)\) be the 
  \emph{solution map} for~\eqref{eq:mmaa}. That is,
  \(\Gamma(x) = y\).
  
  It follows from Gr\"onwall's inequality that \(\Gamma\) is continuous.
  So, if a sequence \(x^\eps\) converges to \(x^0\) in \(C^1\) topology as \(\eps \to 0\),
  then the corresponding sequence \(y^\eps = \Gamma (x^\eps)\) converges
  to \(y^0 = \Gamma(x^0)\) in the \(C^0\) topology.
  
  The domain of \(\Gamma\) can be extended to the space $C^\alpha$ of
  \(\alpha\)-H\"older paths when \(\alpha > 1/2\).
  (Or alternatively to the space of paths of bounded $p$-variation with $p<2$.)
  In this case, the integral in~\eqref{eq:mmaa} is a Young integral~\cite{Young36}.
  The map $\Gamma$ is still continuous on $C^\alpha$, see 
  \cite[Theorem~1.28]{Lyons07}, \cite[Section 8.6]{FrizHairer14}.
  
  But sample paths of Brownian motions are \(\alpha\)-H\"older continuous only when
  \(\alpha < 1/2\), where it is problematic to extend $\Gamma$ in a meaningful way.
  We illustrate a problem with continuity of possible extensions of $\Gamma$
  by the following standard example.
  
  Let
  \[
    A \colon \begin{pmatrix} a \\ b \end{pmatrix}
    \mapsto \begin{pmatrix} 1 & 0 \\ 0 & a \end{pmatrix}
    .
  \]
  Then $y_1 = x_1$ and $y_2(t) = \int_0^t x_1(s) \dot{x}_2(s) \, ds$.
  Let $x^\eps$, $\eps > 0$, be sequence of smooth paths
  \[
    x^\eps (t) = \eps 
    \begin{pmatrix} 
      \cos (\eps^{-2} t) \\ 
      \sin (\eps^{-2} t)
    \end{pmatrix}
    .
  \] 
  It is easy to see that  \(x^\eps\) converges to \(x^0 \equiv 0\) in 
  \(\alpha\)-H\"older topology for each \(\alpha < 1/2\) (but \textbf{not} for \(\alpha \geq 1/2\)). 
  For small \(\eps\),
  \[
    y^\eps(t) = \begin{pmatrix}
      \eps \cos(\eps^{-2} t) \\
      \int_0^t \cos^2 (\eps^{-2} s) \, ds
    \end{pmatrix}
    \approx
    \begin{pmatrix}
      0 \\
      t/2
    \end{pmatrix}
    .
  \]
  Hence \(y^\eps\) does not converge to \(y^0 = \Gamma(x^0) \equiv 0\).
  Thus \(\Gamma\) cannot be extended to a continuous map on
  the space of \(\alpha\)-H\"older paths, \(\alpha < 1/2\).

\begin{rmk}
  In fact, there is no separable Banach space \(\cB \subset C^0([0,\infty), \bR^2)\)
  such that:
  \begin{itemize}
    \item sample paths of Brownian motions lie in \(\cB\) almost surely,
    \item the map \(\Gamma\), defined on smooth paths,
      extends to a continuous map \(\Gamma \colon \cB \to C^0([0,\infty), \bR^2)\).
  \end{itemize}
  See~\cite{Lyons91,Lyons07} or~\cite[Proposition~1.1]{FrizHairer14} for details.
\end{rmk}

\section{Projections of spherical diffusion}
\label{sec:PSD}

For each $n \geq 1$, suppose that $W$ is a Brownian motion in $\bR^n$
with identity covariance matrix.
Define a stochastic process $u$ in $\bR^n$ as a solution
of the Stratonovich differential equation
\[
  du = dW - u \frac{u^* \circ dW}{n}
  , \qquad u(0) = \xi
  .
\]
We require that $\xi$ belongs to the sphere
$S = \{x \in \bR^n : |x| = n\}$.
Then $u$ is a diffusion on $S$.

We are interested in statistical behavior of the one dimensional projections of $u$,
say the first coordinate $u_1$, with large $n$.
For the initial condition $\xi$, we fix $\xi_1$ independent of $n$
and choose $\xi_j$, $j \geq 2$, arbitrarily, deterministic or random
independent of $W$.

\begin{thm}
  As $n \to \infty$, the process $u_1(t)$ converges weakly
  (in the uniform topology)
  to an Ornstein-Uhnelbeck process
  \[
    dX = dB - \frac{1}{2} X \, dt
    , \qquad X(0) = \xi_1
    ,
  \]
  where $B$ is a standard one-dimensional Brownian motion.
\end{thm}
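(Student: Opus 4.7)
The plan is to convert the Stratonovich SDE for $u$ into an It\^o equation for the projection $u_1$, observe that $u_1$ is itself an autonomous Markov process with an explicit generator, and show that this generator converges to that of the OU process; the conclusion then follows by a standard martingale-problem argument.

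First, I would derive the It\^o form. Since $|u|^2 \equiv n$ under the dynamics, converting $du = (I - uu^*/n) \circ dW$ from Stratonovich to It\^o gives
\[
du_1 = dW_1 - \frac{u_1}{n} \sum_{k=1}^n u_k \, dW_k - \frac{n-1}{2n} u_1 \, dt,
\]
with quadratic variation $d\langle u_1\rangle_t = (1 - u_1^2/n) \, dt$. Applied to a function $F(u) = f(u_1)$, the full generator of $u$ reads
\[
\mathcal{L}_n F(u) = \tfrac{1}{2}(1 - u_1^2/n) f''(u_1) - \tfrac{n-1}{2n} u_1 f'(u_1),
\]
which depends only on $u_1$. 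Hence $u_1$ is itself a one-dimensional Markov process with generator $L_n f(x) = \tfrac{1}{2}(1 - x^2/n) f''(x) - \tfrac{n-1}{2n} x f'(x)$ on $[-\sqrt{n}, \sqrt{n}]$, irrespective of the choice of $\xi_j$ for $j \geq 2$.

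Second, I would establish tightness of $\{u_1^{(n)}\}$ in $C^0([0,T], \bR)$. Applying It\^o's formula to $u_1^2$ and taking expectations gives $\tfrac{d}{dt}\bE[u_1^2] = 1 - \bE[u_1^2]$, hence $\bE[u_1^2(t)] = 1 + e^{-t}(\xi_1^2 - 1)$, uniformly bounded in $n$; analogous calculations for $\bE[u_1^{2p}]$ give uniform bounds on all moments. Combined with $|\sigma^{(n)}| \leq 1$ and the BDG inequality applied to the martingale part, Kolmogorov--Chentsov yields tightness.

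Finally, to identify the limit I would pass to the limit in the martingale identity associated with $L_n$. For every $f \in C^2_b(\bR)$, $L_n f \to L f$ uniformly on compacts, where $Lf(x) = \tfrac{1}{2} f''(x) - \tfrac{1}{2} x f'(x)$ is the OU generator; the uniform moment bound from the previous step controls the $O(1/n)$ discrepancy in $L^1(\bP)$ uniformly on $[0,T]$. Hence any weak subsequential limit of $u_1^{(n)}$ solves the martingale problem for $L$ started from $\xi_1$, and well-posedness of the OU martingale problem identifies the limit and forces convergence of the full sequence. The main technical point is controlling the localization so that the martingale identity passes to the limit against unbounded functions (the $x f'(x)$ term in $L$), which is standard given the uniform moment bounds.
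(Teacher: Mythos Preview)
Your proposal is correct but follows a genuinely different route from the paper.

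The paper proceeds by a direct coupling argument. After deriving the same It\^o equation for $p=u_1$ that you write down, it couples $p$ to the Ornstein--Uhlenbeck process $X$ on the \emph{same} probability space by driving $X$ with the very Brownian motion $W_1$ appearing in the equation for $p$. Writing $\delta = p - X$, the paper observes that $d\delta$ has drift $-\tfrac12\delta\,dt + O(n^{-1/2})\,dt$ and a martingale part of size $O(n^{-1/2})$ (using only $|p|\le \sqrt{n}$), and then applies Gr\"onwall's inequality to obtain $\bE\sup_{s\le t}|\delta(s)|\le C_t/\sqrt{n}$. This yields weak convergence immediately.

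Your approach instead treats $u_1$ as an autonomous one-dimensional diffusion, establishes tightness via moment bounds, and identifies the limit through convergence of generators and well-posedness of the OU martingale problem. This is entirely sound; the observation that the generator acts on functions of $u_1$ alone is clean, and for test functions $f\in C_c^\infty$ the discrepancy $L_nf-Lf$ is uniformly $O(1/n)$, so the localization issue you flag is in fact harmless. Compared with the paper's argument, yours is more systematic and transfers readily to situations where a convenient coupling is not available; the paper's argument is shorter, avoids the tightness/identification machinery, and as a bonus delivers an explicit rate $O(n^{-1/2})$ of convergence in $L^1$ of the sup norm, which your method does not give without further work.
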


\begin{proof}
  We write the stochastic differential equation for $u$ in the It\^o
  form~\cite[pages 137--138]{Blowey01}:
  \[
    du = dW - u \frac{u^* \, dW}{n} - \frac{n-1}{2n} u \, dt
    .
  \]
  Denote $p = u_1$. Then
  \[
    dp = \Bigl(1-\frac{p^2}{n}\Bigr) dW_1
    - \frac{\sqrt{n-p^2}}{n} dW' - \frac{n-1}{2n} p \, dt
    , 
  \]
  where $W'$ is a standard one-dimensional Brownian motion,
  independent from $W_1$, which appears as
  \[
    dW' 
    = \frac{\sum_{j \geq 2} u_j \, dW_j}{\sqrt{\sum_{j \geq 2} u_j^2}}
    = \frac{\sum_{j \geq 2} u_j \, dW_j}{\sqrt{n-p^2}}
    .
  \]
  Let $X$ be the Ornstein-Uhlenbeck process,
  $dX = dW_1 - \frac{1}{2} X \, dt$, $X(0) = \xi_1$.
  Let $\delta = p - X$. Note that $\delta(0) = 0$.
  We will show that $\delta(t)$ remains \emph{small}
  for $t \geq 0$.
  Write
  \begin{align*}
    d\delta
    & = \frac{-p^2}{n}\, dW_1 - \frac{\sqrt{n-p^2}}{n} \, dW'
      - \frac{\delta}{2} \, dt + \frac{p}{2n} \, dt
    \\ & = \frac{1}{\sqrt{n}} \, dW'' - \frac{\delta}{2} \, dt
    + \frac{p}{2n} \, dt
    ,
  \end{align*}
  where $W''$ is a standard Brownian motion.
  Another way of writing the above is
  \[
    \delta(t) 
    = \frac{1}{\sqrt{n}} W''(t)
    + \int_0^t \frac{p(s)}{2n} \, ds
    - \frac{1}{2} \int_0^t \delta(s) \, ds
    .
  \]
  By construction, $|p| \leq \sqrt{n}$ at all times. Let 
  \[ 
    \alpha(t) = |\delta(t)|
    \qquad \text{and} \qquad
    \beta(t) = \frac{|W''(t)|}{\sqrt{n}} + \frac{t}{2\sqrt{n}}
    .
  \]
  Then
  \[
    0 \leq \alpha(t) 
    \leq \beta(t) + \frac{1}{2} \int_0^t \alpha(s) \, ds
    .
  \]
  By the Gronwall inequality \cite[Lemma 4.5.1]{KloedenPlaten92},
  \[
    \alpha(t)
    \leq \beta(t) + \frac{1}{2} \int_0^t e^{(t-s)/2} \beta(s) \, ds
    \leq \Bigl(1 + \frac{t e^t}{2}\Bigr) \hat{\beta}(s)
    ,
  \]
  where $\hat{\beta}(t) = \max_{s \in [0,t]} \beta(s)$.
  Let also $\hat{W}''(t) = \max_{s \in [0,t]} |W''(s)|$.
  By Burkholder's inequality,
  $\bE \hat{W}''(t) \leq C \sqrt{t}$, where $C$ is an absolute
  constant. Thus
  $\bE \hat{\beta}(t) \leq C (\sqrt{t} + t) / \sqrt{n}$.
  
  We have shown that $p = X + \delta$, where $X$ is the required
  Ornstein-Uhlenbeck process, and
  $
    \bE (\sup_{s \leq t} |\delta(s)|)
    \leq C_t / \sqrt{n}
    ,
  $
  where $C_t > 0$ only depends on $t$.
  This implies our result. 
\end{proof}

\subsection*{Acknowledgements}

This material is based upon work supported by AFOSR under the award number
FA9500-16-1-0037.

A.K.\ was funded by a European Advanced Grant {\em StochExtHomog} (ERC AdG 320977) at the University of Warwick
and an Engineering and Physical Sciences Research Council grant EP/P034489/1 at the University of Exeter. 

F.B. gratefully acknowledges financial support from the Simons Foundation award 
number 359963.

\end{document}